\documentclass[a4paper]{llncs}

\usepackage{xspace}
\usepackage{graphicx}
\usepackage{algorithm}
\usepackage{algorithmic}
\usepackage{theorem}
\usepackage{amsmath,amssymb}
\usepackage{amsfonts}

\usepackage{color}

\def\calf{\mathcal{f}}

\def\AFO{\textbf{\textsf{AFO}}}

\def\Tr{\textsf{True}}

\def\f{{\theta}}
\def\r{{\omega}}
\def\p{{{\phi}}}
\def\d{{{\delta}}}



\def\Cons{\textsf{\textbf{C}}}

\def\Var{\textsf{\textbf{{V}}}}
\def\PredE{\textsf{\textbf{{P}}}_\textit{{EM}}}
\def\PredA{\textsf{\textbf{{P}}}_\textit{{AM}}}
\def\GndE{\textsf{\textbf{{G}}}_\textit{{EM}}}
\def\GndA{\textsf{\textbf{{G}}}_\textit{{AM}}}
\def\wldE{\mathcal{W}_\textit{{EM}}}
\def\wldA{\mathcal{W}_\textit{{AM}}}
\def\cala{\mathcal{A}}
\def\ICE{\textsf{\textbf{{IC}}}_\textit{{EM}}}
\def\ICA{\textsf{\textbf{{IC}}}_\textit{{AM}}}

\def\Pr{\textsf{\textit{Pr}}}
\def\kE{\mathcal{K}_\textit{{EM}}}
\def\PAM{\Pi_{\textit{AM}}}
\def\PAMI{\Pi_{\textit{AM}}^{\cali}}
\def\PEM{\Pi_{\textit{EM}}}
\def\cali{\mathcal{I}}
\def\fEM{\textit{form}_{EM}}
\def\fAM{\textit{form}_{AM}}
\def\war{\vdash_{\textsf{war}}}
\def\nwar{\not\vdash_{\textsf{war}}}
\def\must{nec}
\def\can{poss}

\def\calf{\mathcal{F}}
\def\calfs{\calf^*}
\def\<{\langle}
\def\>{\rangle}
\def\calt{\mathcal{T}}
\def\calts{\mathcal{T}^*}

\def\af{\textit{af{\,}}}

\def\Pdist{\textbf{\textsf{P}}_{L,\Pr,\cali}}

\def\AfCp{CandPgm_{af}}

\newtheorem{observation}{Observation}
\newcommand{\smallbsq}{\hfill{\tiny $\blacksquare$}}

\begin{document}

\title{Belief Revision in \\
Structured Probabilistic Argumentation}

\author{Paulo Shakarian\inst{1} \and
Gerardo I.\ Simari\inst{2} \and
Marcelo A.\ Falappa\inst{3}}
\institute{
Department of Electrical Engineering and Computer Science \\
U.S.\ Military Academy, West Point, NY, USA \\
\email{paulo@shakarian.net}
\and
Department of Computer Science, University of Oxford, United Kingdom \\
\email{gerardo.simari@cs.ox.ac.uk}
\and
Departamento de Ciencias e Ingenier\'{\i}a de la Computaci\'on \\
Universidad Nacional del Sur, Bah\'{\i}a Blanca, Argentina \\
\email{mfalappa@cs.uns.edu.ar}
}

\maketitle

\begin{abstract}
In real-world applications, knowledge bases consisting of all the information at hand for a specific
domain, along with the current state of affairs, are bound to contain contradictory data coming from
different sources, as well as data with varying degrees of uncertainty attached. Likewise, an important
aspect of the effort associated with maintaining knowledge bases is deciding what information is no
longer useful; pieces of information (such as intelligence reports) may be outdated, may come from sources
that have recently been discovered to be of low quality, or abundant evidence may be available that
contradicts them. In this paper, we propose a probabilistic
structured argumentation framework that arises from the extension of Presumptive Defeasible Logic Programming
(PreDeLP) with probabilistic models, and argue that this formalism is capable of addressing the basic
issues of handling contradictory and uncertain data.
Then, to address the last issue, we focus on the study of non-prioritized belief revision operations
over probabilistic PreDeLP programs. We propose a set of rationality postulates -- based on well-known ones
developed for classical knowledge bases -- that characterize how such operations should behave, and study a
class of operators along with theoretical relationships with the proposed postulates, including a representation
theorem stating the equivalence between this class and the class of operators characterized by the postulates.
\end{abstract}

\newcommand{\from}[3]{\noindent {\bf [{\sc from #1 to #2:} {\small #3}]}}

\newcommand{\SRules}{\mbox{$\Omega$}}
\newcommand{\Facts}{\mbox{$\Theta$}}
\newcommand{\DRules}{\mbox{$\Delta$}}
\newcommand{\Presumps}{\mbox{$\Phi$}}
\newcommand{\pdelpprog}{(\Facts, \SRules, \Presumps,  \DRules)}
\newcommand{\PDeLP}{PreDeLP}

\newcommand{\warrantFormula}{{\sf warrantFormula}}

\newcommand{\ag}{\mathbb{A}}

\newcommand{\ie}{\textit{i.e.},\xspace}

\newcommand{\eg}{\textit{e.g.,}\xspace}

\newcommand{\etal}{\textit{et al}\xspace}

\newcommand{\Attacks}{\mbox{$attacks$}}
\newcommand{\allArgs}{\mbox{$AR$}}

\newcommand{\ArgSB}[2]{$\langle#1,#2\rangle_{_\SB}$}
\newcommand{\Args}[2]{$\langle#1,#2\rangle$}
\newcommand{\AS}[2]{\langle#1,#2\rangle}
\newcommand{\comp}[1]{\mbox{$\overline{#1}$}}
\newcommand{\layer}[1]{\mathit{index}_{_\SB}(#1)}

\newcommand{\ArgFrame}{\mbox{$\langle\allArgs, \Attacks \rangle$}}

\newcommand{\dexp}{$\delta$-Explanation}
\newcommand{\dexps}{\mbox{$\delta$-Explanations}}

\newcommand{\yes}{{\sc yes}}
\newcommand{\ansno}{{\sc no}}
\newcommand{\und}{{\sc undecided}}
\newcommand{\unk}{{\sc unknown}}

\newcommand{\prolog}{{\sc prolog}}
\newcommand{\DLP}{\mbox{{\sc DeLP}}}
\newcommand{\delp}{\DLP-program}

\newcommand{\mergesimb}{$\circ$}
\newcommand{\merge}[2]{\textnormal{#1\mergesimb#2}}

\newcommand{\no}{\mbox{$\neg$}}
\newcommand{\PP}{\PAM}
\newcommand{\DD}{\mbox{$\DRules\cup\Presumps$}}
\newcommand{\SSet}{\mbox{$\Pi$}}
\newcommand{\SD}{\mbox{$(\SSet,\DD)$}}
\newcommand{\Prog}{\mbox{$\PP$}}
\newcommand{\pair}[2]{$(#1,#2)$}
\newcommand{\SyA}{\mbox{\SSet\ $\cup$ \Arg}}

\newcommand{\Argu}[1]{\mbox{$\mathcal{A}_#1$}}
\newcommand{\Brgu}[1]{\mbox{$\mathcal{B}_#1$}}
\newcommand{\Arg}{\mbox{$\mathcal{A}$}}
\newcommand{\AL}{\ensuremath{\langle \Arg,L \rangle}}

\newcommand{\srule}[2]{\mbox{$#1\!\leftarrow#2$}}
\newcommand{\facto}[1]{\mbox{$#1$}}

\newcommand{\defleft}{\mbox{\bf--\hspace{-1pt}\raise.1185pt\hbox{$\prec$} }}
\newcommand{\defleftarrow}{{\raise1.5pt\hbox{\tiny\defleft}}}
\newcommand{\drule}[2]{\mbox{$#1 \;\defleftarrow #2$}}

\newcommand{\Tree}[1]{\mbox{${\mathcal T}({\small #1})$}}
\newcommand{\MTree}[1]{\mbox{${\mathcal T^*}({\small #1})$}}

\newcommand{\Aline}{\mbox{$\Lambda$}}
\newcommand{\Sline}{\mbox{$\Aline_S$}}
\newcommand{\Iline}{\mbox{$\Aline_I$}}

\newcommand{\Ar}[2]{\mbox{$\langle #1,#2 \rangle $}}
\newcommand{\Arbrace}[2]{\mbox{$\langle \{#1\},#2 \rangle $}}
\newcommand{\Aho}{\mbox{$\langle \Arg_0,h_0 \rangle $}}
\newcommand{\Aha}{\mbox{$\langle \Arg_1,h_1 \rangle $}}
\newcommand{\Ahb}{\mbox{$\langle \Arg_2,h_2 \rangle $}}
\newcommand{\Ahc}{\mbox{$\langle \Arg_3,h_3 \rangle $}}
\newcommand{\Ahd}{\mbox{$\langle \Arg_4,h_4 \rangle $}}
\newcommand{\Ahe}{\mbox{$\langle \Arg_5,h_5 \rangle $}}
\newcommand{\Ahi}{\mbox{$\langle \Arg_i,h_i \rangle $}}
\newcommand{\Ahk}{\mbox{$\langle \Arg_k,h_k \rangle $}}
\newcommand{\Aheven}{\mbox{$\langle \Arg_{2k},h_{2k} \rangle $}}
\newcommand{\Ahodd}{\mbox{$\langle \Arg_{2k-1},h_{2k-1} \rangle $}}
\newcommand{\AhiAnt}{\mbox{$\langle \Arg_{i-1},h_{i-1} \rangle $}}
\newcommand{\AhiProx}{\mbox{$\langle \Arg_{i+1},h_{i+1} \rangle $}}
\newcommand{\Ahn}{\mbox{$\langle \Arg_n,h_n \rangle $}}
\newcommand{\Ahdp}{\mbox{$\langle \Arg_4',h_4' \rangle $}}

\newcommand{\Barg}{\mbox{${\mathcal B}$}}
\newcommand{\Carg}{\mbox{${\mathcal C}$}}
\newcommand{\Darg}{\mbox{${\mathcal D}$}}
\newcommand{\Bqa}{\mbox{$\langle \Barg_1,q_1 \rangle $}}
\newcommand{\Bqb}{\mbox{$\langle \Barg_2,q_2 \rangle $}}
\newcommand{\Bqi}{\mbox{$\langle \Barg_i,q_i \rangle $}}
\newcommand{\Bqk}{\mbox{$\langle \Barg_k,q_k \rangle $}}
\newcommand{\Bq}{\mbox{$\langle \Barg,q \rangle $}}
\newcommand{\Ah}{\mbox{$\langle \Arg,h \rangle $}}

\newcommand{\Dnode}{\mbox{``\textsf{D}''}}
\newcommand{\Unode}{\mbox{``\textsf{U}''}}

\newcommand{\expla}[2]{\mbox{$\mathcal{E}_{#1}(#2)$}}
\newcommand{\expl}{\mbox{$\mathcal{E}$}}

\newcommand{\A}{\mathcal{A}}
\newcommand{\B}{\mathcal{B}}
\newcommand{\C}{\mathcal{C}}
\newcommand{\D}{\mathcal{D}}
\newcommand{\E}{\mathcal{E}}
\newcommand{\F}{\mathcal{F}}
\newcommand{\G}{\mathcal{G}}
\renewcommand{\H}{\mathcal{H}}
\newcommand{\I}{\mathcal{I}}
\newcommand{\J}{\mathcal{J}}
\newcommand{\K}{\mathcal{K}}
\renewcommand{\L}{\mathcal{L}}
\newcommand{\M}{\mathcal{M}}
\newcommand{\N}{\mathcal{N}}
\renewcommand{\O}{\mathcal{O}}
\renewcommand{\P}{\mathcal{P}}
\newcommand{\Q}{\mathcal{Q}}
\newcommand{\R}{\mathcal{R}}
\renewcommand{\S}{\mathcal{S}}
\newcommand{\T}{\mathcal{T}}
\newcommand{\U}{\mathcal{U}}
\newcommand{\V}{\mathcal{V}}
\newcommand{\W}{\mathcal{W}}
\newcommand{\X}{\mathcal{X}}
\newcommand{\Y}{\mathcal{Y}}
\newcommand{\Z}{\mathcal{Z}}
\newcommand{\Pre}{\mathcal{P}}

\def\StrPt{\SRules\cup\Facts}
\def\kitSink{\StrPt\cup\DD}


\section{Introduction and Related Work}
\label{sec:intro}

Decision-support systems that are part of virtually any kind of real-world application must be part of
a framework that is rich enough to deal with several basic problems:
(i) handling contradictory information;
(ii) answering abductive queries;
(iii) managing uncertainty; and (iv) updating beliefs.
{\em Presumptions} come into play as key components of answers to abductive queries, and must be maintained
as elements of the knowledge base; therefore, whenever candidate answers to these queries are evaluated,
the (in)consistency of the knowledge base together with the presumptions being made needs to be addressed
via belief revision operations.

In this paper, we begin by proposing a framework that addresses
items (i)--(iii) by extending Presumptive \DLP~\cite{MartinezGS12} (\PDeLP, for short) with probabilistic
models in order to model uncertainty in the application domain; the resulting framework
is a general-purpose probabilistic argumentation language that we will refer to as Probabilistic \PDeLP (P-\PDeLP,
for short).

In the second part of this paper, we address the problem of updating beliefs -- item (iv) above -- in P-\PDeLP\
knowledge bases, focusing on the study of non-prioritized belief revision operations.
We propose a set of rationality postulates characterizing how such operations should behave -- these postulates
are based on the well-known postulates proposed in~\cite{hansson1997semi} for non-prioritized belief revision in
classical knowledge bases. We then study a class of operators and their theoretical relationships with
the proposed postulates, concluding with a representation theorem.


\medskip
\noindent\textbf{Related Work.}  Belief revision studies changes to knowledge bases as a response to
{\em epistemic inputs}. Traditionally,
such knowledge bases can be either belief sets (sets of formulas closed under
consequence)~\cite{AlchourronGM85,gar88} or belief
bases~\cite{Hansson94,hansson1997semi} (which are not closed);
since our end goal is to apply the results we obtain to real-world domains,
here we focus on belief bases. In particular, as motivated by requirements (i)--(iv) above, our knowledge bases
consist of logical formulas over which we apply argumentation-based reasoning and to which we couple a
probabilistic model.
The connection between belief revision and argumentation was first studied
in~\cite{Doyle79}; since then, the work that is most closely related to our approach is
the development of the explanation-based operators of~\cite{FalappaKS02}.

The study of argumentation systems together with probabilistic reasoning has recently
received a lot attention, though a significant part has been in the combination between the two has been in the form
of probabilistic abstract argumentation~\cite{LiON11,Thimm12,Hunter12,FazzingaFP13}. There have, however, been several
approaches that combine structured argumentation with models for reasoning under uncertainty; the first of such
approaches to be proposed was~\cite{haenni1999probarg}, and several others followed, such as
the possibilistic approach of~\cite{Chesnevar:2004}, and the probabilistic logic-based approach
of~\cite{Hunter13}.
The main difference between these works and our own is that here we adopt a bipartite knowledge
base, where one part models the knowledge that is not inherently probabilistic -- uncertain knowledge is modeled
separately, thus allowing a clear separation of interests between the two kinds of models.
This approach is based on a similar one developed for ontological languages in the Semantic Web
(see~\cite{LukSM13}, and references within).


Finally, to the best of our knowledge, this is the first paper in which the combination of structured
argumentation, probabilistic models, and belief revision has been addressed in conjunction.



\section{Preliminaries}
\label{sec:prelims}

The Probabilistic \PDeLP\ (P-\PDeLP, for short) framework is composed of two separate models of the world.
The first is called the {\em environmental model} (referred to as ``EM''), and is used to describe the
probabilistic knowledge that we have about the domain.
The second one is called the {\em analytical model} (referred to as ``AM''), and is used to analyze
competing hypotheses that can account for a given phenomenon -- what we will generally call queries.
The AM is composed of a classical (that is, non-probabilistic) \PDeLP\ program in order to allow for
contradictory information, giving the system the capability to model competing explanations for a given query.

\medskip
\noindent
{\bf Two Kinds of Uncertainty.}
In general, the EM contains knowledge such as evidence, uncertain facts, or knowledge about agents and systems.
The AM, on the other hand, contains ideas that a user may conclude based on the information in the EM.
Table~\ref{emAmTable} gives some examples of the types of information that could appear in each of the two models
in a cyber-security application.
Note that a knowledge engineer (or automated system) could assign a probability to statements in the EM column, whereas
statements in the AM column can be either true or false depending on a certain combination (or several possible
combinations) of statements from the EM.
There are thus two kinds of uncertainty that need to be modeled: probabilistic uncertainty and uncertainty arising
from defeasible knowledge. As we will see, our model allows both kinds of uncertainty to coexist, and also allows
for the combination of the two since defeasible rules and presumptions (that is, defeasible facts) can also be
annotated with probabilistic events.

In the rest of this section, we formally describe these two models, as well as how knowledge in the AM can be
annotated with information from the EM -- these annotations specify the conditions under which the various
statements in the AM can potentially be true.

\begin{table}[t]
\centering
\begin{footnotesize}
\begin{tabular}{l|l}
{\bf Probabilistic Model (EM)}         & {\bf Analytical Model (AM)} \\ \hline\hline
``Malware X was compiled on a system   & ``Malware X was compiled on a system in\\
using the English language.''          &  English-speaking country Y.'' \\ \hline
``County Y and country Z are           & ``Country Y has a motive to launch a \\
currently at war.''                    & cyber-attack against country Z \\ \hline
``Malware W and malware X were created & ``Malware W and malware X are related. \\
in a similar coding style.''           &                                        \\ \hline
\end{tabular}
\end{footnotesize}
\caption{Examples of the kind of information that could be represented in the two different models
in a cyber-security application domain.}
\label{emAmTable}
\end{table}

\medskip
\noindent
{\bf Basic Language.}
We assume sets of variable and constant symbols, denoted with
$\Var$ and $\Cons$, respectively. In the rest of this paper, we will use capital letters to represent variables
(e.g., $X,Y,Z$), while lowercase letters represent constants.
The next component of the language is a set of $n$-ary predicate symbols; the EM and AM use separate sets of predicate
symbols, denoted with $\PredE,\PredA$, respectively -- the two models can, however, share variables and constants.
As usual, a {\em term} is composed of either a variable or constant. Given terms $t_1,..., t_n$ and $n$-ary predicate symbol
$p$, $p(t_1,...,t_n)$ is called an {\em atom}; if $t_1,...,t_n$ are constants, then the atom is said to be {\em ground}.
The sets of all ground atoms for EM and AM are denoted with $\GndE$ and $\GndA$, respectively.

Given set of ground atoms, a \textit{world} is any subset of atoms -- those that belong to the set are said to be
{\em true} in the world, while those that do not are {\em false}.  Therefore, there are $2^{|\GndE|}$ possible worlds in the
EM and $2^{|\GndA|}$ worlds in the AM. These sets are denoted with $\wldE$ and $\wldA$, respectively.
In order to avoid worlds that do not model possible situations given a particular domain, we include {\em integrity
constraints} of the form $\textsf{oneOf}(\cala')$,  where $\cala'$ is a subset of ground atoms.  Intuitively, such a constraint
states that any world where more than one of the atoms from set $\cala'$ appears is invalid.
We use $\ICE$ and $\ICA$ to denote the sets of integrity constraints for the EM and AM,
respectively, and the sets of worlds that conform to these constraints is denoted with $\wldE(\ICE),\wldA(\ICA)$,
respectively.

Finally, logical formulas arise from the combination of atoms using the traditional connectives
($\wedge$, $\vee$, and $\neg$).
As usual, we say a world $w$ {\em satisfies} formula ($f$), written $w\models f$, iff:
(i) If $f$ is an atom, then $w \models f$ iff $f \in w$;
(ii) if $f = \neg f'$ then $w \models f$ iff $w \not\models f'$;
(iii) if $f=f' \wedge f''$ then $w \models f$ iff $w\models f'$ and $w\models f''$; and
(iv) if $f=f' \vee f''$ then $w \models f$ iff $w \models f'$ or $w \models f''$.
We use the notation $\fEM, \fAM$ to denote the set of all possible (ground) formulas in the EM and AM, respectively.


\subsection{Probabilistic Model}
\label{sec:probModel}

The EM or environmental model
is largely based on the probabilistic logic of~\cite{nil86}, which we now briefly review.

\begin{definition}
\label{def:probform}
Let $f$ be a formula over $\PredE$, $\Var$, and $\Cons$, $p \in [0,1]$, and $\epsilon \in [0,\min(p,1-p)]$.
A {\em probabilistic formula} is of the form $f : p \pm \epsilon$.
A set $\kE$ of probabilistic formulas is called a {\em probabilistic knowledge base}.
\end{definition}
In the above definition, the number $\epsilon$ is referred to as an {\em error tolerance}.
Intuitively, probabilistic formulas are interpreted as ``formula $f$ is true with probability between $p - \epsilon$
and $p + \epsilon$'' -- note that there are no further constraints over this interval apart from those imposed by other
probabilistic formulas in the knowledge base.
The uncertainty regarding the probability values stems from the fact
that certain assumptions (such as probabilistic independence) may not be suitable in the environment being modeled.

\begin{example}
\label{ex:EM}
\label{logPgmEx}
Consider the following set $\kE$:

\medskip
$
\begin{array}{llllllllllll}
f_1 &=& a          & : 0.8 \pm 0.1 \ \ \ \ \ & f_4 &=& d \wedge e &: 0.7 \pm 0.2  \ \ \ \ \ & f_7 &=& k                   &: 1 \pm 0\\
f_2 &=& b          & : 0.2 \pm 0.1 & f_5 &=& f \wedge g \wedge h &: 0.6 \pm 0.1 &&&&\\
f_3 &=& c          &: 0.8 \pm 0.1  & f_6 &=& i \vee \neg j       &: 0.9 \pm 0.1 &&&&\\
\end{array}
$

\medskip
\noindent
Throughout the paper, we also use $\kE' =\{f_1,f_2,f_3\}$
\vspace*{-8pt}
\smallbsq
\end{example}

A set of probabilistic formulas describes a set of possible probability distributions $\Pr$ over the set $\wldE(\ICE)$.
We say that probability distribution $\Pr$ \textit{satisfies} probabilistic formula $f:p \pm \epsilon$ iff:
$
p - \epsilon \leq \sum_{w \in \wldE(\ICE)}\Pr(w) \leq p+\epsilon.
$
We say that a probability
distribution over $\wldE(\ICE)$ \emph{satisfies} $\kE$ iff it satisfies all probabilistic formulas in $\kE$.

Given a probabilistic knowledge base and a (non-probabilistic) formula $q$,
the \textit{maximum entailment} problem seeks to identify real numbers $p,\epsilon$ such that all valid probability
distributions $\Pr$ that satisfy $\kE$ also satisfy $q: p \pm \epsilon$, and there does not exist $p', \epsilon'$ s.t.\
$[p-\epsilon,p+\epsilon] \supset [p'-\epsilon',p'+\epsilon']$, where all probability distributions $\Pr$ that satisfy
$\kE$ also satisfy $q : p' \pm \epsilon'$.
In order to solve this problem we must solve the linear program defined below.

\begin{definition}
Given a knowledge base $\kE$ and a formula $q$, we have a variable $x_i$ for each $w_i \in \wldE(\ICE)$.
\begin{itemize}\addtolength{\itemsep}{6pt}
\item For each $f_j : p_j \pm \epsilon_j \in \kE$, there is a constraint of the form:

\begin{quote}
$
p_j - \epsilon_j \leq \sum_{w_i \in \wldE(\ICE) \textit{ s.t.\ } w_i \models f_j}x_i \leq p_j + \epsilon_j.
$
\end{quote}

\item We also have the constraint:
$
\sum_{w_i \in \wldE(\ICE)} x_i = 1.
$

\item\label{objFcn}
The objective is to minimize the function:
$
\sum_{w_i \in \wldE(\ICE) \textit{ s.t.\ } w_i \models q} x_i.
$
\end{itemize}
We use the notation $\textsf{EP-LP-MIN}(\kE,q)$ to refer to the value of the objective function in the
solution to the \textsf{EM-LP-MIN} constraints.
\end{definition}

The next step is to solve the linear program a second time, but instead maximizing the objective function (we shall
refer to this as \textsf{EM-LP-MAX}) -- let $\ell$ and $u$ be the results of these operations, respectively.
In \cite{nil86}, it is shown that $\epsilon = \frac{u-\ell}{2}$ and $p = \ell+\epsilon$ is the solution to the maximum entailment problem.  We note that although the above linear program has an exponential number of variables in the worst
case (i.e., no integrity constraints), the presence of constraints has the potential to greatly reduce this space.
Further, there are also good heuristics (cf.\ \cite{amai07,amai12}) that have been shown to provide highly accurate
approximations with a reduced-size linear program.

\begin{example}
\label{linPrgEx}
Consider KB $\kE'$ from Example~\ref{ex:EM} and a set of ground atoms restricted to those that appear in that program;
we have the following worlds:

\medskip
$
\begin{array}{lllllllllllllll}
w_1 &=& \{ a, b, c \}  \ \ \ & w_2 &=& \{ a, b \} \ \ \ & w_3 &=& \{ a, c \} \ \ \ & w_4 &=& \{ b, c \} \\
w_5 &=& \{ b \}              & w_6 &=& \{ a \}          & w_7 &=& \{ c \}          & w_8 &=& \emptyset   & & & \\
\end{array}
$
\medskip

\noindent
and suppose we wish to compute the probability for formula
$
q = a \vee c
$.
For each formula in $\kE$ we have a constraint, and for each world above
we have a variable.  An objective function is created based on the worlds that satisfy the query formula
(in this case, worlds $w_1, w_2, w_3, w_4, w_6, w_7$).
Solving $\textsf{EP-LP-MAX}(\kE',q)$ and $\textsf{EP-LP-MIN}(\kE',q)$, we obtain the solution $0.9 \pm 0.1$.
\smallbsq
\end{example}


\section{Argumentation Model}
\label{sec:argModel}

For the analytical model (AM), we choose a structured argumentation framework~\cite{RahwanSimari2009} due to
several characteristics that make such frameworks highly applicable to many domains.  Unlike the EM, which describes
probabilistic information about the state of the real world, the AM must allow for competing ideas.  Therefore,
it must be able to represent contradictory information. The algorithmic approach we shall later describe
allows for the creation of \textit{arguments} based on the AM that may ``compete'' with each other to answer a given
query.  In this competition -- known as a \textit{dialectical process} -- one argument may
defeat another based on a \textit{comparison criterion} that determines the prevailing argument.  Resulting from this
process, certain arguments are \textit{warranted} (those that are not \emph{defeated} by other
arguments) thereby providing a suitable explanation for the answer to a given query.

The transparency provided by the system can allow knowledge engineers to identify potentially incorrect input information and
fine-tune the models or, alternatively, collect more information.  In short, argumentation-based reasoning has been studied
as a natural way to manage a set of inconsistent information -- it is the way humans settle disputes. As we will see,
another desirable characteristic of (structured) argumentation frameworks is that, once a conclusion is reached, we are
left with an explanation of how we arrived at it and information about why a given argument is warranted; this is very
important information for users to have. In the following, we first recall the basics of the underlying
argumentation framework used, and then go on to introduce the analytical model (AM).

\subsection{Defeasible Logic Programming with Presumptions (PreDeLP)}
\label{sec:delp}

Defeasible Logic Programming with Presumptions (\PDeLP)~\cite{MartinezGS12} is a formalism combining
logic programming
with defeasible argumentation; it arises as an extension of classical \DLP~\cite{Garcia-Simari04-TPLP}
with the possibility of having presumptions, as described below -- since this capability is useful in many
applications, we adopt this extended version in this paper.
In this section, we briefly recall the basics of \PDeLP;
we refer the reader to~\cite{Garcia-Simari04-TPLP,MartinezGS12} for the complete presentation.

The formalism contains several different constructs: facts, presumptions, strict rules, and defeasible rules.
Facts are statements about the analysis that can always be considered to be true, while presumptions are statements
that may or may not be true.  Strict rules specify logical consequences of a set of facts or presumptions (similar to an implication, though not the same) that must always occur, while defeasible rules specify logical consequences that may be assumed to be true when
no contradicting information is present.  These building blocks are used in the construction of \emph{arguments},
and are part of a \PDeLP\ program, which is a set of facts, strict rules, presumptions, and defeasible rules.
Formally, we use the notation $\PP = \pdelpprog$ to denote a \PDeLP\ program, where $\SRules$ is the
set of strict rules, $\Facts$ is the set of facts, $\DRules$ is the set of defeasible rules, and $\Presumps$ is the set
of presumptions.  In Figure~\ref{fig:gndArgEx}, we provide an example $\PP$.
We now define these constructs formally.

\begin{figure}[t]
\centering
\fbox{
\parbox{0.95\columnwidth}{
\begin{tabular}{lllllllll}
$\Facts:$ & $\f_{1a}=$ & $p$ \ \ \ & $\f_{1b}=$ & $q$ \ \ \ & $\f_2=$ & $r$ &&\\
\multicolumn{9}{c}{\rule{0.9\columnwidth}{0.4pt}} \\
$\SRules:$& $\r_{1a}=$ & $\neg s \leftarrow t$ \ \ \ & $\r_{1b}=$ & $\neg t \leftarrow s$ \ \ \ & $\r_{2a}=$ & $s \leftarrow p, u, r, v$ \ \ \ & $\r_{2b}=$ & $t \leftarrow q, w, x, v$ \\
\multicolumn{9}{c}{\rule{0.9\columnwidth}{0.4pt}} \\
$\Presumps:$ & $\p_1=$&  $y \; \defleftarrow$ \ \ \ & $\p_2=$&$v \; \defleftarrow$ \ \ \ & $\p_3=$&$\neg z \; \defleftarrow$\\
\multicolumn{9}{c}{\rule{0.9\columnwidth}{0.4pt}}  \\
$\DRules:$& $\d_{1a}=$ & $s \; \defleftarrow p$ \ \ \ & $\d_{1b}=$ & $t \; \defleftarrow q$ \ \ \ & $\d_2=$    & $s \; \defleftarrow u$ \ \ \ & $\d_3=$    & $s \; \defleftarrow r, v$ \\
          & $\d_4=$    & $u \; \defleftarrow y$ \ \ \ & $\d_{5a}=$ & $\neg u \; \defleftarrow \neg z$ \ \ \ & $\d_{5b}=$ & $\neg w \; \defleftarrow \neg n$ \\
\end{tabular}
}}
\caption{An example (propositional) argumentation framework.}
\label{fig:gndArgEx}
\end{figure}

\medskip
\noindent
\textbf{Facts} ($\Facts$) are ground literals representing atomic information or its negation, using
strong negation ``\no''.  Note that all of the literals in our framework must be formed with a predicate from the
set $\PredA$.  Note that information in the form of facts cannot be contradicted.  We will use the notation $[\Facts]$ to denote the set of all possible facts.

\medskip
\noindent
\textbf{Strict Rules} ($\SRules$) represent non-defeasible cause-and-effect information that resembles an implication (though the
semantics is different since the contrapositive does not hold) and are of the form
$
\srule{L_0}{L_1, \ldots, L_n}
$,
where  $L_0$ is a ground literal and $\{L_i\}_{i>0}$ is a set of ground literals.  We will use the notation $[\SRules]$ to denote the set of all possible strict rules.

\medskip
\noindent
\textbf{Presumptions} ($\Presumps$) are ground literals of the same form as facts, except that they are not taken as being true
but rather defeasible, which means that they can be contradicted. Presumptions are denoted in the same manner as facts,
except that the symbol $\defleftarrow$ is added.

\medskip
\noindent
\textbf{Defeasible Rules} ($\DRules$) represent tentative knowledge that can be used if nothing can be posed against it.
Just as presumptions are the defeasible counterpart of facts, defeasible rules are the defeasible counterpart of
strict rules. They are of the form
$
\drule{L_0}{L_1, \ldots, L_n}
$,
where  $L_0$ is a ground literal and $\{L_i\}_{i>0}$ is a set of ground literals.  In both strict and
defeasible rules, {\it strong negation} is allowed in the head of rules, and hence may be used to represent contradictory
knowledge.

\bigskip

Even though the above constructs are ground, we allow for schematic versions with variables that are used to represent
sets of ground rules.
We denote variables with strings starting with an uppercase letter.

\medskip
\noindent
{\bf Arguments.}
Given a query in the form of a ground atom, the goal is to derive arguments for and against it's validity --
derivation follows the same mechanism of logic programming~\cite{ll87}. Since rule heads can contain strong
negation, it is possible to defeasibly derive contradictory literals from a program.
For the treatment of contradictory knowledge, \PDeLP\ incorporates a defeasible argumentation formalism
that allows the identification of the pieces of knowledge that are in conflict
and, through the previously mentioned dialectical process, decides which information prevails as warranted.
This dialectical process involves the construction and evaluation
of arguments, building a \emph{dialectical tree} in the process.  Arguments are formally defined next.

\begin{definition}
An \emph{argument} $\langle\Arg, L\rangle$ for a literal $L$ is a pair of the literal and a (possibly empty) set of the EM ($\Arg \subseteq \PP$) that provides a minimal proof for $L$ meeting the following requirements:
(i) $L$ is defeasibly derived from $\Arg$;
(ii) $\StrPt\cup\Arg$ is not contradictory; and
(iii) $\Arg$ is a minimal subset of $\DD$ satisfying 1 and 2, denoted \AL.

Literal $L$ is called the {\em conclusion} supported by the argument, and $\Arg$ is the \emph{support} of the argument. An argument $\langle \B, L\rangle$
is a {\em subargument} of $\langle \A, L'\rangle$ iff $\B \subseteq \A$.  An argument $\langle\A, L\rangle$ is {\em presumptive} iff $\A \cap \Presumps$ is not empty.  We will also use $\SRules(\A) = \A \cap \SRules$, $\Facts(\A) = \A \cap \Facts$, $\DRules(\A) = \A \cap \DRules$, and $\Presumps(\A) = \A \cap \Presumps$.
\end{definition}
Our definition differs slightly from that of~\cite{Simari-Loui92}, where DeLP is introduced, as we include strict
rules and facts as part of arguments -- the reason for this will become clear in Section~\ref{sec:probdelp}.
Arguments for our scenario are shown next.

\begin{example}
\label{ex:args}
Figure~\ref{fig:gndArgsEx} shows example arguments based on the knowledge base from Figure~\ref{fig:gndArgEx}.
Note that $\<\cala_5,u\>$ is a sub-argument of $\<\cala_2,s\>$ and $\<\cala_3,s\>$.
\smallbsq
\end{example}

\begin{figure}[t]
\centering
\fbox{
\parbox{0.85\columnwidth}{
\begin{tabular}{llll}
$\<\cala_1,s\>$& $\cala_1 = \{\f_{1a},\d_{1a}\}$ \ \ \ \ \ & $\<\cala_2,s\>$& $\cala_2 = \{\p_1,\p_2,\d_4,\r_{2a},\f_{1a},\f_2\}$\\
$\<\cala_3,s\>$& $\cala_3 = \{\p_1,\d_2,\d_4\}$ \ \ \ \ \ &  $\<\cala_4,s\>$& $\cala_4 = \{\p_2,\d_3,\f_2\}$\\
$\<\cala_5,u\>$& $\cala_5 = \{\p_1,\d_4\}$ & $\<\cala_6,\neg s\>$& $\cala_6 = \{\d_{1b},\f_{1b},\r_{1a}\}$\\
$\<\cala_7,\neg u\>$& $\cala_7 = \{\p_{3},\d_{5a}\}$ & & \\
\end{tabular}
}}
\caption{Example ground arguments from the framework of Figure~\ref{fig:gndArgEx}.}
\label{fig:gndArgsEx}
\end{figure}

Given an argument $\langle \A_1, L_1\rangle$, counter-arguments are arguments that contradict it. Argument
$\langle \A_2, L_2\rangle$ is said to {\em counterargue} or {\em attack} $\langle \A_1, L_1\rangle$ at a literal $L'$
iff there exists a subargument $\langle \A, L''\rangle$ of $\langle \A_1, L_1\rangle$ such that the set
$\SRules(\A_1) \cup \SRules(\A_2) \cup \Facts(\A_1) \cup \Facts(\A_2) \cup \{L_2,L''\}$ is contradictory.

\begin{example}
\label{ex:attack}
Consider the arguments from Example~\ref{ex:args}. The following are some of the
attack relationships between them:
$\cala_1$, $\cala_2$, $\cala_3$, and $\cala_4$ all attack $\cala_6$;
$\cala_5$ attacks $\cala_7$; and
$\cala_7$ attacks $\cala_2$.
\smallbsq
\end{example}

A {\em proper defeater} of an argument $\langle A, L\rangle$ is a counter-argument
that -- by some criterion -- is considered to be better than $\langle A, L\rangle$; if the two are incomparable according
to this criterion, the counterargument is said to be a {\em blocking} defeater.
An important characteristic of \PDeLP\ is that the argument comparison criterion is modular, and thus the most appropriate criterion for the domain that is being represented can be selected; the default criterion used in classical defeasible logic programming (from which \PDeLP\ is derived) is \emph{generalized specificity}~\cite{jncl2003}, though an extension of this criterion is required for arguments using presumptions~\cite{MartinezGS12}. We briefly recall this criterion next -- the first
definition is for generalized specificity, which is subsequently used in the definition of presumption-enabled
specificity.

\begin{definition}
\label{def:pspecificity}
Let $\Prog = \pdelpprog$ be a \PDeLP\ program and let $\F$ be the set of all literals that
have a defeasible derivation from $\Prog$.
An argument
$\langle \A_1, L_1\rangle$ is \emph{preferred to} $\langle \A_2, L_2\rangle$, denoted with
$\A_1 \succ_{PS} \A_2$ iff:

\smallskip
\noindent
$(1)$ \ For all $H \subseteq \F$, $\SRules(\cala_1)\cup\SRules(\A_2) \cup H$ is non-contradictory: if there is a derivation for $L_1$
from $\SRules(\A_2)\cup \SRules(\A_1)\cup\DRules(\A_1) \cup H$,
and there is no derivation for $L_1$ from $\SRules(\A_1) \cup \SRules(\A_2) \cup H$, then there is a derivation for $L_2$ from
$\SRules(\A_1) \cup \SRules(\A_2) \cup \DRules(\A_2) \cup H$; and

\smallskip
\noindent
$(2)$ \ there is at least one set $H' \subseteq \F$, $\SRules(\cala_1)\cup\SRules(\A_2) \cup H'$ is non-contradictory, such that there is a
derivation for $L_2$ from $\SRules(\cala_1)\cup\SRules(\A_2) \cup H' \cup \DRules(\A_2)$, there is no derivation for $L_2$ from $\SRules(\cala_1)\cup\SRules(\A_2)  \cup H'$, and
there is no derivation for $L_1$ from $\SRules(\cala_1)\cup\SRules(\A_2)  \cup H' \cup \DRules(\A_1)$.
\end{definition}

Intuitively, the principle of specificity says that, in the presence of two conflicting lines of argument about a
proposition, the one that uses more of the available information is more convincing. A classic example involves
a bird, Tweety, and arguments stating that it both flies (because it is a bird) and doesn't fly (because it is a penguin).
The latter argument uses more information about Tweety -- it is more specific -- and is thus the stronger of the two.

\begin{definition}[\cite{MartinezGS12}]
\label{def:MGEPref}
Let $\Prog = \pdelpprog$ be a \PDeLP\ program. An argument
$\langle \A_1, L_1\rangle$ is \emph{preferred to} $\langle \A_2, L_2\rangle$, denoted with
$\A_1 \succ \A_2$ iff any of the following conditions hold:

\smallskip
\noindent
$(1)$ \ $\langle \A_1, L_1\rangle$ and $\langle \A_2, L_2\rangle$ are both factual arguments and
$\langle \A_1, L_1\rangle \succ_{PS} \langle \A_2, L_2\rangle$.

\smallskip
\noindent
$(2)$ \  $\langle \A_1, L_1\rangle$ is a factual argument and $\langle \A_2, L_2\rangle$ is a presumptive argument.

\smallskip
\noindent
$(3)$ \ $\langle \A_1, L_1\rangle$ and $\langle \A_2, L_2\rangle$ are presumptive arguments, and

\smallskip
\noindent
\hspace*{4mm}$(a)$ \ $\Presumps(\A_1) \subsetneq \Presumps(\A_2)$ or, \label{itema}

\smallskip
\noindent
\hspace*{4mm}$(b)$ \ $\Presumps(\A_1) = \Presumps(\A_2)$ and $\langle \A_1, L_1\rangle \succ_{PS} \langle \A_2, L_2\rangle$.\label{itemb}
\end{definition}
Generally, if $\A, \B$ are arguments with rules $X$ and $Y$, resp., and $X \subset Y$, then $\A$
is stronger than $\B$. This also holds when $\A$ and $\B$ use presumptions $P_1$ and $P_2$, resp.,
and $P_1 \subset P_2$.

\begin{example}
The following are some relationships between arguments from Example~\ref{ex:args},
based on Definitions~\ref{def:pspecificity} and \ref{def:MGEPref}.
\begin{quote}
$\cala_1$ and $\cala_6$ are incomparable (blocking defeaters); \\
$\cala_6 \succ \cala_2$, and thus $\cala_6$ defeats $\cala_2$; \\
$\cala_5$ and $\cala_7$ are incomparable (blocking defeaters).
\smallbsq
\end{quote}
\end{example}

A sequence of arguments called an \emph{argumentation line} thus arises from this attack relation,
where each argument defeats its predecessor. To avoid undesirable sequences, which may represent circular
argumentation lines, in \DLP\ an \emph{argumentation line} is \emph{acceptable} if it
satisfies certain constraints (see~\cite{Garcia-Simari04-TPLP}).
A literal $L$ is \emph{warranted} if there exists a non-defeated argument \Arg\ supporting $L$.

Clearly, there can be more than one defeater for a particular argument $\<\A, L\>$. Therefore, many acceptable
argumentation lines could arise from $\<\A, L\>$, leading to a tree structure.
The tree is built from the set of all argumentation lines rooted in the initial argument.
In a dialectical tree, every node (except the root) represents a defeater of its parent, and leaves
correspond to undefeated arguments.
Each path from the root to a leaf corresponds to a different acceptable argumentation line.
A dialectical tree provides a structure for considering all the possible acceptable argumentation lines that
can be generated for deciding whether an argument is defeated. We call this tree \emph{dialectical} because
it represents an exhaustive dialectical\footnote{In the sense of providing reasons for and against a position.} analysis for the argument in its root.
For a given argument $\<\A, L\>$, we denote the corresponding dialectical tree as \Tree{\AL}.

Given a literal $L$ and an argument \AL, in order to decide whether or not a literal $L$ is warranted,
every node in the dialectical tree \Tree{\AL} is recursively marked as \Dnode\ (\emph{defeated}) or
\Unode\ (\emph{undefeated}), obtaining a marked dialectical tree \MTree{\AL} as follows:
\begin{enumerate}
\item  All leaves in \MTree{\AL} are marked as \Unode s, and
\item  Let \Bq\ be an inner node of \MTree{\AL}.
  Then \Bq\ will be marked as \Unode\
      iff every child of \Bq\ is marked as  \Dnode.
  The node \Bq\ will be marked as \Dnode\
    iff it has at least a child marked as \Unode.
\end{enumerate}

Given an argument \AL\ obtained from $\PP$, if the root of \MTree{\AL} is marked as \Unode, then we will
say that \MTree{\Ah} \emph{warrants} $L$ and that $L$ is \emph{warranted} from $\PP$.
(Warranted arguments correspond to those in the grounded extension of a Dung argumentation system \cite{dung95}.)
There is a further requirement when the arguments in the dialectical tree contains presumptions --
the conjunction of all presumptions used in even (respectively, odd) levels of the tree must be consistent.
This can give rise to multiple trees for a given literal, as there can potentially be different arguments that make
contradictory assumptions.

We can then extend the idea of a dialectical tree to a \textit{dialectical forest}.  For a given literal $L$, a dialectical forest $\calf(L)$ consists of the set of dialectical trees for all arguments for $L$.  We shall denote a marked dialectical forest, the set of all marked dialectical trees for arguments for $L$, as $\calfs(L)$.  Hence, for a literal $L$, we say it is \textit{warranted} if there is at least one argument for that literal in the dialectical forest $\calfs(L)$ that is labeled as $\Unode$, \textit{not warranted} if there is at least one argument for the literal $\neg L$ in the dialectical forest $\calfs(\neg L)$ that is labeled as $\Unode$, and \textit{undecided} otherwise.


\section{Probabilistic \PDeLP}
\label{sec:probdelp}

Probabilistic \PDeLP\ arises from the combination of the environmental and analytical models ($\PEM$ and $\PAM$, respectively).
Intuitively, given $\PAM$, every element of $\kitSink$ might only hold in certain worlds in the set
$\wldE$ -- that is, they are subject to probabilistic events.
Therefore, we associate elements of $\kitSink$ with a formula from
$\fEM$.
For instance, we could associate formula $\textit{rainy}$ to fact $\textit{umbrella}$ to state that the latter only holds
when the probabilistic event $\textit{rainy}$ holds; since weather is uncertain in nature, it has been modeled as part of
the EM.

We can then compute the probabilities of subsets of $\kitSink$ using the information contained in
$\PEM$, as we describe shortly.  The notion of an \textit{annotation function}
associates elements of $\kitSink$ with elements of $\fEM$.

\begin{definition}
An {\em annotation function} is any function $\af : \kitSink \rightarrow \fEM$.  We shall use $[\af]$ to denote the set of all annotation functions.
\end{definition}
We will sometimes denote annotation functions as sets of pairs $(f,\textit{af}(f))$ in order to simplify the presentation.
Figure~\ref{fig:annoEx} shows an example of an annotation function for our running example.

\begin{figure}[t]
\small
\centering
\fbox{
\parbox{0.9\columnwidth}{
\begin{tabular}{llll}
$\textit{af}(\f_{1a}) = \textit{af}(\f_{1b})$ &
$= k \vee \big(f \wedge \big( h \vee (e \wedge l)\big)\big)$ \ \ \ \ \  & $\textit{af}(\p_3)$ & $= b$\\
$\textit{af}(\f_2)$ & $= i$ & $\textit{af}(\d_{1a}) = \textit{af}(\d_{1b})$ & $= \Tr$ \\
$\textit{af}(\r_{1a}) = \textit{af}(\r_{1b})$ & $= \Tr$ & $\textit{af}(\d_2)$ & $= \Tr$ \\
$\textit{af}(\r_{2a}) = \textit{af}(\r_{2b})$ & $= \Tr$ & $\textit{af}(\d_3)$ & $= \Tr$  \\
$\textit{af}(\p_1)$ & $= c \vee a$ & $\textit{af}(\d_4)$ & $= \Tr$ \\
$\textit{af}(\p_2)$ & $= f \wedge m$ & $\textit{af}(\d_{5a}) = \textit{af}(\d_{5b})$ & $= \Tr$
\end{tabular}
}}
\caption{Example annotation function.}
\label{fig:annoEx}
\end{figure}

We now have all the components to formally define Probabilistic \PDeLP\ programs (P-\PDeLP\ for short).

\begin{definition}
Given environmental model $\PEM$, analytical model $\PAM$, and annotation function $\af$,
a {\em probabilistic \PDeLP\ program} is of the form
$\cali = (\PEM,\PAM,\af)$.
We use notation $[\cali]$ to denote the set of all possible programs.
\end{definition}

Given this setup, we can consider a world-based approach; that is,
the defeat relationship among arguments depends on the current state of the (EM) world.

\begin{definition}
Let $\cali = (\PEM,\PAM,\af)$ be a P-\PDeLP\ program, argument $\<\A, L\>$ is {\em valid} w.r.t.\ world
$w \in \wldE$ iff $\forall c \in \A, w \models \textit{af}(c)$.
\end{definition}

We extend the notion of validity to argumentation lines, dialectical trees, and dialectical forests in the
expected way (for instance, an argumentation line is valid w.r.t.\ $w$ iff all arguments that comprise that
line are valid w.r.t.\ $w$).
We also extend the idea of a dialectical tree w.r.t.\ worlds; so, for a given world $w \in \wldE$, the
dialectical (resp., marked dialectical) tree induced by $w$ is denoted with $\calt_w{\AL}$ (resp., $\calts_w{\AL}$).
We require that all
arguments and defeaters in these trees to be valid with respect to $w$.  Likewise, we extend the notion of dialectical
forests in the same manner (denoted with $\calf_w(L)$ and $\calfs_w(L)$, resp.).
Based on these concepts\, we introduce the notion of \textit{warranting scenario}.

\begin{definition}
\label{def:warrantScenario}
Let $\cali = (\PEM, \PAM, \af)$ be a P-\PDeLP\ program and
$L$ be a literal formed with a ground atom from $\GndA$;
a world $w \in \wldE$ is said to be a {\em warranting scenario} for $L$
(denoted $w \war L$)
iff there is a dialectical forest $\calfs_w(L)$ in which $L$ is warranted 
and $\calfs_w(L)$ is valid w.r.t.\ $w$.
\end{definition}

Hence, the set of worlds in the EM where a literal $L$ in the AM \textit{must} be true is exactly the set of
warranting scenarios -- these are the ``necessary'' worlds:
$
\must(L) = \{w \in \wldE \; | \; (w \war L)
\}
$.
Now, the set of worlds in the EM where AM literal $L$ \textit{can} be true is the following -- these are the ``possible'' worlds:
$
\can(L) = \{ w \in \wldE \; | \; w \nwar \neg L \}.
$
The probability distribution $\Pr$ defined over the worlds in the EM
induces an upper and lower bound on the probability of literal $L$ (denoted $\Pdist$) as follows:
\[
\ell_{L,\Pr,\cali}= \sum_{w \in \must(L)}\Pr(w), \ \ \ \ \ u_{L,\Pr,\cali} = \sum_{w \in \can(L)}\Pr(w)
\]
\[
\ell_{L,\Pr,\cali} \leq \Pdist \leq  u_{L,\Pr,\cali}
\]
Since the EM in general does not define a single probability distribution, the above computations should be
done using linear programs \textsf{EP-LP-MIN} and \textsf{EP-LP-MAX}, as described above.


\subsection{Sources of Inconsistency}
\label{sec:sources}

We use the following notion of (classical) consistency of \PDeLP\ programs:
$\Pi$ is said to be {\em consistent} if there does not exist ground literal $a$ s.t.\
$\Pi \vdash a$ and $\Pi \vdash \neg a$.
For P-\PDeLP\ programs, there are two main kinds of inconsistency that can be present;
the first is what we refer to as EM, or Type I, (in)consistency.

\begin{definition}
Environmental model $\PEM$ is {\em Type I consistent} iff there exists a probability distribution $\Pr$ over the set of
worlds $\wldE$ that satisfies $\PEM$.
\end{definition}
We illustrate this type of consistency in the following example.

\begin{example}
The following formula is a simple example of an EM for which there is no satisfying probability distribution:
\begin{eqnarray*}
rain &  \vee   & hail : 0.3 \pm 0; \\
rain &  \wedge & hail : 0.5 \pm 0.1.
\end{eqnarray*}
A P-\PDeLP\ program using such an EM gives rise to an example of Type~I inconsistency, as it arises from the fact 
that there is no satisfying interpretation for the EM knowledge base.
\smallbsq
\end{example}

Assuming a consistent EM, inconsistencies can still arise through the interaction between the annotation function and
facts and strict rules. We will refer to this as combined, or Type II, (in)consistency.

\begin{definition}
\label{type2def}
A P-\PDeLP\ program $\cali = (\PEM, \PAM, \af)$, with $\PAM$ $=$ \linebreak
$\<\Theta,\Omega,\Phi,\Delta\>$, is {\em Type II consistent} iff:
given any probability distribution $\Pr$ that satisfies $\PEM$,
if there exists a world $w \in \wldE$ such that
$\bigcup_{x \in \Theta \cup \Omega \, | \, w \models \textit{af}(x)}\{x\}$
is inconsistent, then we have $\Pr(w) = 0$.
\end{definition}
Thus, any EM world in which the set of associated facts and strict rules are inconsistent
(we refer to this as ``classical consistency'') must always be assigned a zero probability.
The following is an example of this other type of inconsistency.

\begin{example}
\label{ex:incon-II}
Consider the EM knowledge base from Example~\ref{logPgmEx}, 
the AM presented in Figure~\ref{fig:gndArgEx} and the 
annotation function from Figure~\ref{fig:annoEx}.  
Suppose the following fact is added to the argumentation model:
\[
\f_{3}= \neg p,
\]
and that the annotation function is expanded as follows:
\[
\af(\f_3)= \neg k.
\]
Clearly, fact $\f_3$ is in direct conflict with fact $\f_{1a}$ -- this does not necessarily mean that there 
is an inconsistency.
For instance, by the annotation function, $\f_{1a}$ holds in the world $\{k\}$ while $\f_3$ does not.
However, if we consider the world:
\[
w=\{f,h)
\]
Note that $w \models \af(\f_3)$ and $w \models \af(\f_2)$, which means that, in this world, two contradictory 
facts can occur. Since the environmental model indicates that this world can be assigned a non-zero probability,
we have a Type~II inconsist program.
\smallbsq
\end{example}
Another example (perhaps easier to visualize) in the rain/hail scenario discussed above, is as follows: 
suppose we have facts $f = umbrella$ and $g = \neg umbrella$, and annotation function 
$\af(f) = rain \vee hail$ and
$\af(g) = wind$. Intuitively, the first fact states that an umbrella should be carried if it either rains or hails, 
while the 
second states that an umbrella should not be carried if it is windy. If the EM assigns a non-zero probability to 
formula $(rain \vee hail) \wedge wind$, then we have Type~II inconsistency.

In the following, we say that a P-\PDeLP\ program is \textbf{consistent} if and only if it is both Type~I and 
Type~II consistent.
However, in this paper, we focus on Type~II consistency and assume that the program is~Type I consistent.

\subsection{Basic Operations for Restoring Consistency}
\label{sec:basicOps}

Given a P-\PDeLP\ program that is Type II inconsistent, there are two basic strategies that can be used to
restore consistency:

\medskip
\noindent
{\em Revise the EM}: the probabilistic model can be changed in order to force the worlds that induce
contradicting strict knowledge to have probability zero.

\smallskip
\noindent
{\em Revise the annotation function}: The annotations involved in the inconsistency can be changed so that
the conflicting information in the AM does not become induced under any possible world.

\medskip
It may also appear that a third option would be to adjust the AM -- this is, however, equivalent to
modifying the annotation function.  Consider the presence of two facts in the AM: $a, \neg a$.
Assuming that this causes an inconsistency (that is, there is at least one world in which they both hold),
one way to resolve it would be to remove one of these two literals. Suppose $\neg a$ is removed;
this would be equivalent to setting $\textit{af}(\neg a)=\bot$ (where $\bot$ represents a contradiction
in the language of the EM).
{\em In this paper, we often refer to ``removing elements of $\PAM$'' to refer to changes to the annotation
function that cause certain elements of the $\PAM$ to not have their annotations satisfied in certain EM worlds.}

Now, suppose that $\PEM$ is consistent, but that the overall program is Type II inconsistent.
Then, there must exist a set of worlds in the EM where there is a probability distribution that assigns each
of them a non-zero probability.
This gives rise to the following result.

\begin{proposition}
\label{prop:zeroOut}
If there exists a probability distribution $\Pr$ that satisfies $\PEM$ s.t.\ there exists a world
$w \in \wldE$ where $\Pr(w)>0$ and
$\bigcup_{x \in \Theta \cup \Omega \, | \,w \models \textit{af}(x)}\{x\}$ is inconsistent (Type II inconsistency), then
any change made in order to resolve this inconsistency by modifying only $\PEM$ yields a new EM $\PEM'$
such that $\big(\bigwedge_{a \in w}a \wedge \bigwedge_{a \notin w}\neg a\big) : 0 \pm 0$ is entailed by $\PEM'$.
\end{proposition}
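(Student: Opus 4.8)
The plan is to argue by contradiction, showing that if the new environmental model $\PEM'$ does not entail $\big(\bigwedge_{a \in w}a \wedge \bigwedge_{a \notin w}\neg a\big) : 0 \pm 0$, then the inconsistency has not actually been resolved. First I would unpack what ``resolving the Type~II inconsistency by modifying only $\PEM$'' means: the annotation function $\af$ and the analytical model $\PAM$ are held fixed, so for the \emph{particular} world $w$ in the hypothesis, the set $\bigcup_{x \in \Theta \cup \Omega \,|\, w \models \textit{af}(x)}\{x\}$ is still inconsistent after the change, since that set depends only on $w$, $\af$, $\Theta$, and $\Omega$. By Definition~\ref{type2def}, Type~II consistency of the new program requires that \emph{every} probability distribution $\Pr'$ satisfying $\PEM'$ assigns $\Pr'(w) = 0$.

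Next I would translate the condition ``$\Pr'(w) = 0$ for all $\Pr'$ satisfying $\PEM'$'' into an entailment statement in the probabilistic logic of the EM. The formula $g_w := \big(\bigwedge_{a \in w}a \wedge \bigwedge_{a \notin w}\neg a\big)$ is the characteristic formula of the world $w$: for any world $w' \in \wldE(\ICE)$, we have $w' \models g_w$ iff $w' = w$. Hence for any probability distribution $\Pr'$ over $\wldE(\ICE)$, the quantity $\sum_{w' \models g_w}\Pr'(w')$ equals $\Pr'(w)$ exactly. So $\Pr'$ satisfies $g_w : 0 \pm 0$ iff $\Pr'(w) = 0$. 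Consequently, ``all $\Pr'$ satisfying $\PEM'$ assign $\Pr'(w) = 0$'' is, by the definition of entailment of a probabilistic formula given just before the maximum-entailment discussion, precisely the statement that $\PEM'$ entails $g_w : 0 \pm 0$. (One should also note the degenerate possibility that $w \notin \wldE(\ICE)$, i.e. $w$ violates an integrity constraint; in that case $\Pr'(w) = 0$ holds vacuously for every distribution and the entailment is immediate, so the interesting case is $w \in \wldE(\ICE)$.)

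Putting these together: since $\PAM$ and $\af$ are unchanged, the world $w$ from the hypothesis still induces an inconsistent set of facts and strict rules in $\PEM'$; for the new program to be Type~II consistent, Definition~\ref{type2def} forces $\Pr'(w)=0$ for every satisfying $\Pr'$; and by the characteristic-formula observation this is equivalent to $\PEM'$ entailing $g_w : 0 \pm 0$, which is exactly the claim. I would remark that the statement is essentially definitional once the right encoding of ``world $w$ has probability zero'' as a probabilistic formula is in place. The only mild subtlety — and the step I would be most careful about — is making sure the quantifier structure matches Definition~\ref{type2def} exactly: the definition quantifies over \emph{all} distributions satisfying $\PEM$, so the conclusion must be an entailment (holding under all models) rather than a statement about some particular distribution; the proof must therefore carry the ``for all $\Pr'$'' through both directions of the translation rather than fixing one distribution.
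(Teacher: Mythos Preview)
Your proof is correct. The paper does not give an explicit proof of this proposition; it is stated as a direct consequence of Definition~\ref{type2def} and the preceding sentence (``there must exist a set of worlds in the EM where there is a probability distribution that assigns each of them a non-zero probability''). Your argument is exactly the natural unpacking of that definition: modifying only $\PEM$ leaves $\af$, $\Theta$, $\Omega$ fixed, so the same world $w$ still induces an inconsistent strict part; Type~II consistency then forces $\Pr'(w)=0$ for every $\Pr'\models\PEM'$; and the characteristic formula $g_w$ translates this universal zero-probability condition into the entailment $g_w:0\pm 0$. The only thing you add beyond what the paper leaves implicit is the careful handling of the quantifier structure and the degenerate case $w\notin\wldE(\ICE)$, both of which are appropriate.
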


Proposition~\ref{prop:zeroOut} seems to imply an easy strategy of adding formulas to $\PEM$ causing certain worlds to
have a zero probability.  However, this may lead to Type~I inconsistencies in the resulting model $\PEM'$.
If we are applying an EM-only strategy to resolve inconsistencies, this would then lead to further adjustments
to $\PEM'$ in order to restore Type~I consistency.  However, such changes could potentially lead to Type~II
inconsistency in the overall P-\PDeLP\ program (by either removing elements of $\PEM'$ or loosening probability
bounds of the sentences in $\PEM'$), which would lead to setting more EM worlds to a probability of zero.
It is easy to devise an example of a situation in which the probability mass cannot be accommodated given the constraints
imposed by the AM and EM together -- in such cases, it would be impossible to restore consistency by only 
modifying $\PEM$. We thus arrive at the following observation:

\begin{observation}
\label{noEmOnly}
Given a Type II inconsistent P-\PDeLP\ program, consistency cannot always be restored via modifications to
$\PEM$ alone.
\end{observation}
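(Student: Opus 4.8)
The plan is to exhibit a single P-\PDeLP\ program witnessing the impossibility claim: one in which the AM forces so many EM worlds to be inconsistent (in the sense of Definition~\ref{type2def}) that no probability distribution satisfying the EM constraints can avoid placing positive mass on at least one of them. By Proposition~\ref{prop:zeroOut}, any EM-only repair must entail that every ``bad'' world has probability $0\pm0$; so it suffices to construct an EM whose probabilistic formulas \emph{force} the total mass on the bad worlds to be bounded below by a strictly positive constant. Then the only way to drive that mass to zero via EM modifications is to also weaken or remove the offending probabilistic formulas, and I will argue that doing so either re-introduces Type~II inconsistency elsewhere or violates Type~I consistency, so the process cannot terminate with a consistent EM-only repair.

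\textbf{Construction.} First I would take the rain/hail-style example already discussed in the text and push it slightly: let the AM contain facts $a$ and $\neg a$ with annotations $\af(a) = r$ and $\af(\neg a) = r$ for a single EM atom $r$, so that \emph{every} EM world containing $r$ is Type~II inconsistent. Now add to the EM the single probabilistic formula $r : 1 \pm 0$. Then every satisfying distribution must place all its mass on worlds where $r$ is true, i.e.\ entirely on bad worlds; the lower bound on bad-world mass is $1$. By Proposition~\ref{prop:zeroOut}, an EM-only repair must make $\PEM'$ entail probability $0$ for each such world, hence entail $r : 0 \pm 0$. But then $\PEM'$ must have dropped or contradicted $r:1\pm0$: if it is merely loosened to $r : p \pm \epsilon$ with $0$ in range, the distribution putting all mass on $r$-worlds is still satisfying, so Type~II inconsistency persists; and if $r:1\pm0$ is retained alongside anything forcing $r:0\pm0$, the EM is Type~I inconsistent. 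Either horn contradicts ``consistency restored by modifying $\PEM$ alone.''

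\textbf{Key steps, in order.} (1)~Fix the AM, $\af$, and EM above and verify the program is Type~I consistent (the distribution concentrated on $\{r\}$ satisfies $r:1\pm0$) but Type~II inconsistent (that same distribution assigns positive probability to the inconsistent world $\{r\}$). (2)~Invoke Proposition~\ref{prop:zeroOut} to conclude that any EM-only repair $\PEM'$ entails $\big(\bigwedge_{a\in w}a \wedge \bigwedge_{a\notin w}\neg a\big):0\pm0$ for every $r$-world $w$, and sum these to get that $\PEM'$ entails $r:0\pm0$. (3)~Case split on how $\PEM'$ was obtained from $\PEM$: either $r:1\pm0$ survives (then combined with the entailed $r:0\pm0$ the EM has no satisfying distribution, contradicting Type~I consistency of the repair), or $r:1\pm0$ was removed/weakened (then the all-mass-on-$\{r\}$ distribution satisfies $\PEM'$ and still witnesses Type~II inconsistency, since the AM/annotation part is untouched by an EM-only strategy). (4)~Conclude that no EM-only modification yields a consistent program, establishing the observation.

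\textbf{Main obstacle.} The subtle point is pinning down what ``modifications to $\PEM$ alone'' is permitted to include — in particular whether one may \emph{add} fresh EM atoms and formulas rather than only editing existing ones. If new atoms are allowed, one must check that enlarging $\GndE$ does not give an escape route; but it does not, because the annotations $\af(a),\af(\neg a)$ still reference only $r$, so every (enlarged) world satisfying $r$ remains Type~II bad, and $r:1\pm0$ still forces all mass onto such worlds regardless of how the new atoms are distributed. Making this monotonicity argument airtight — that adding EM structure cannot ``dilute'' the forced mass on bad worlds while keeping Type~I consistency — is the part that needs the most care; everything else is a direct application of Proposition~\ref{prop:zeroOut} and the definitions.
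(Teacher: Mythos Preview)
Your case split in step~(3) has a genuine gap that invalidates the example. In horn~2 you assert that once $r:1\pm 0$ is ``removed/weakened'' the point mass on $\{r\}$ satisfies $\PEM'$; but an EM-only modification may \emph{also add} new probabilistic formulas, not merely drop or loosen the existing one. Concretely, take $\PEM' = \{\, r : 0 \pm 0 \,\}$. This is obtained by removing $r:1\pm 0$ and inserting a new constraint, so it falls under your horn~2. It is Type~I consistent (the point mass on $\emptyset$ satisfies it), and it is Type~II consistent because every distribution satisfying $\PEM'$ assigns probability~$0$ to the unique bad world $\{r\}$. So for your chosen program an EM-only repair \emph{does} exist, and the example does not witness the observation. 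Note also that your step~(2), which concludes via Proposition~\ref{prop:zeroOut} that any successful repair entails $r:0\pm 0$, already contradicts the claim in horn~2 that the $\{r\}$-point-mass satisfies $\PEM'$.

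The fix is easy: choose annotations so that \emph{every} EM world is bad, not just the $r$-worlds. For instance, keep facts $a$ and $\neg a$ but set $\af(a)=\af(\neg a)=\Tr$ (a tautology in $\fEM$, as used elsewhere in the paper). Then for every $w\in\wldE$ the set $\PAMI(w)$ contains both $a$ and $\neg a$ and is classically inconsistent. Since any probability distribution over $\wldE$ must place total mass~$1$ on $\wldE$, some bad world receives positive mass under \emph{every} $\PEM'$ that is Type~I consistent, and Type~II inconsistency is unavoidable regardless of how $\PEM$ is modified. This is exactly the ``probability mass cannot be accommodated'' situation the paper alludes to informally before the observation; the paper itself does not give a formal witness or proof, only the informal discussion preceding the statement, so a correct version of your construction would in fact supply more than the paper does.
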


Therefore, due to this line of reasoning, in this paper we focus our efforts on modifications to the
annotation function only.  However, in the future, we intend to explore belief revision operators that consider both
the annotation function (which, as we saw, captures changes to the AM) along with changes to the EM, as well as combinations
of the two.


\section{Revising Probabilistic \PDeLP\ Programs}
\label{sec:conc}

Given a P-\PDeLP\ program $\cali = (\PEM, \PAM, \af)$, with $\PAM = \kitSink$,
we are interested in solving the problem of incorporating an epistemic input $(f,\af')$ into $\cali$, where
$f$ is either an atom or a rule and $\af'$ is equivalent to $\af$, except for its expansion to include $f$.  For ease of presentation, we assume that $f$ is to be incorporated as a fact or strict rule, since incorporating defeasible knowledge can never lead to inconsistency.  As we are only conducting annotation function revisions, for $\cali = (\PEM, \PAM, \af)$ and input $(f,\af')$ we denote the revision as follows: $\cali \bullet (f,\af') = (\PEM, \PAM', \af'')$ where $\PAM'=\PAM\cup\{f\}$
and $\af''$ is the revised annotation function.

\medskip
\noindent
{\bf Notation.} We use the symbol ``$\bullet$'' to denote the revision operator.  We also slightly abuse
notation for the sake of presentation, as well as introduce notation to convert sets of worlds to/from formulas.
\begin{itemize}\addtolength{\itemsep}{6pt}
\item $\cali \cup (f,\af')$ to denote $\cali' = (\PEM,\PAM \cup \{f\}, \af')$.
\item $(f,\af') \in \cali = (\PAM, \PEM, \af)$ to denote $f \in \PAM$ and $\af=\af'$.
\item $wld(f)=\{w \; | \; w \models f\}$ -- the set of worlds that satisfy formula $f$; and
\item $for(w)=\bigwedge_{a \in w}a \wedge \bigwedge_{a \notin w}\neg a$ -- the formula that has $w$ as its only model.
\item ${\PAMI}(w) = \{f \in \Theta \cup \Omega \; | \; w \models \textit{af}(f)\}$
\item $\wldE^{0}(\cali) = \{w \in \wldE \; | \; \PAMI(w) \; \text{is inconsistent}\}$
\item $\wldE^{I}(\cali) = \{w \in \wldE^{0} \; | \; \exists \Pr \textit{ s.t. } \Pr \models \PEM \wedge \Pr(w) >0\}$
\end{itemize}
Intuitively, $\PAMI(w)$ is the subset of facts and strict rules in $\PAM$
whose annotations are true in EM world $w$.
The set $\wldE^{0}(\cali)$ contains all the EM worlds for a given
program where the corresponding knowledge base in the AM is classically inconsistent and
$\wldE^{I}(\cali)$ is a subset of these that can be assigned a non-zero probability --
the latter are the worlds where inconsistency in the AM can arise.

\subsection{Postulates for Revising the Annotation Function}
\label{sec:postulates-af}

We now analyze the rationality postulates for non-prioritized revision of belief bases first introduced
in~\cite{hansson1997semi} and later generalized in~\cite{FalappaKRS12}, 
in the context of P-\PDeLP\ programs.
These postulates are chosen due to the fact that they are well studied in the literature for non-prioritized
belief revision. 

\medskip
\noindent
{\bf Inclusion:}
For $\cali \bullet (f,\af') = (\PEM, \PAM\cup\{f\}, \af'')$, $\forall g \in \PAM$, $wld\big(\af''(g)\big) \subseteq wld(\af'(g))$.

\smallskip
\noindent
This postulate states that, for any element in the AM, the worlds that satisfy its annotation after the revision are
a subset of the original set of worlds satisfying the annotation for that element.

\medskip
\noindent
{\bf Vacuity:}
If $\cali \cup (f,\af')$ is consistent, then $\cali \bullet (f,\af') = \cali \cup (f,\af')$

\medskip
\noindent
{\bf Consistency Preservation:}
If $\cali$ is consistent, then $\cali \bullet (f,\af')$ is also consistent.

\medskip
\noindent
{\bf Weak Success:}
If $\cali \cup (f,\af')$ is consistent, then $(f,\af')\in \cali\bullet(f,\af')$.

\smallskip
\noindent
Whenever the simple addition of the input doesn't cause inconsistencies to arise, the result will contain the input.

\medskip
\noindent
{\bf Core Retainment:}
For $\cali \bullet (f,\af') = (\PEM, \PAM\cup\{f\}, \af'')$, for each
$w \in \wldE^{I}(\cali\cup(f,\af'))$, we have $X_w=\{ h \in \Theta \cup \Omega \; | \; w \models \af''(h)\}$;
for each $g \in \PAM(w)\setminus X_w$
there exists $Y_w \subseteq X_w \cup \{f\}$ s.t.\
$Y_w$ is consistent and $Y_w \cup \{g\}$ is inconsistent.

\smallskip
\noindent
For a given EM world, if a portion of the associated AM knowledge base is removed by the operator, then there
exists a subset of the remaining knowledge base that is not consistent with the removed element and $f$.

\medskip
\noindent
{\bf Relevance:}
For $\cali \bullet (f,\af') = (\PEM, \PAM\cup\{f\}, \af'')$, for each $w \in \wldE^{I}(\cali\cup(f,\af'))$,
we have $X_w=\{ h \in \Theta\cup\Omega \; | \; w \models\af''(h)\}$; for each $g \in \PAM(w)\setminus X_w$ there exists $Y_w \supseteq X_w \cup \{f\}$ s.t.\
$Y_w$ is consistent and $Y_w \cup \{g\}$ is inconsistent.

\smallskip
\noindent
For a given EM world, if a portion of the associated AM knowledge base is removed by the operator, then there
exists a superset of the remaining knowledge base that is not consistent with the removed element and $f$.

\medskip
\noindent
{\bf Uniformity 1:}
Let $(f, \af'_1), (g, \af'_2)$ be two inputs where $\wldE^{I}(\cali\cup(f,\af'_1))=\wldE^{I}(\cali\cup(g,\af'_2))$; for all $w \in \wldE^{I}(\cali\cup(f,\af'))$ and for all
$X \subseteq \PAM(w)$;
if $\{x \; | \; x \in X\cup\{f\}, w \models \af'_1(x)\}$ is inconsistent
iff $\{x \; | \; x \in X\cup\{g\}, w \models \af'_2(x)\}$ is inconsistent, then for each $h \in \PAM$, we have that:
\[
\{ w \in \wldE^{I}(\cali\cup(f,\af'_1)) \; | \; w\models \af'_1(h) \wedge \neg\af''_1(h) \}=
\]
\[
\{ w \in \wldE^{I}(\cali\cup(g,\af'_2)) \; | \; w\models \af'_2(h) \wedge \neg \af''_2(h) \}.
\]

\smallskip
\noindent
If two inputs result in the same set of EM worlds leading to inconsistencies in an AM knowledge base,
and the consistency between analogous subsets (when joined with the respective input) are the same,
then the models removed from the annotation of a given strict rule or fact are the
same for both inputs.

\medskip
\noindent
{\bf Uniformity 2:}
Let $(f, \af'_1), (g, \af'_2)$ be two inputs where $\wldE^{I}(\cali\cup(f,\af'_1))=\wldE^{I}(\cali\cup(g,\af'_2))$; for all $w \in \wldE^{I}(\cali\cup(f,\af'))$and
for all $X \subseteq \PAM(w)$;
if $\{x \; | \; x \in X\cup\{f\}, w \models \af'_1(x)\}$ is inconsistent
iff $\{x \; | \; x \in X\cup\{g\}, w \models \af'_2(x)\}$ is inconsistent, then
\[
\{ w \in \wldE^{I}(\cali\cup(f,\af'_1)) \; | \; w\models \af'_1(h) \wedge \af''_1(h) \}=
\]
\[
\{ w \in \wldE^{I}(\cali\cup(g,\af'_2)) \; | \; w\models \af'_2(h) \wedge \af''_2(h) \}.
\]

\smallskip
\noindent
If two inputs result in the same set of EM worlds leading to inconsistencies in an AM knowledge base, and
the consistency between analogous subsets (when joined with the respective input) are the same, then
the models retained in the the annotation of a given strict rule or fact are the
same for both inputs.

\medskip
\noindent\textbf{Relationships between Postulates.}  
There are a couple of interesting relationships among the postulates. The first is a sufficient condition for 
Core Retainment to be implied by Relevance.

\begin{proposition}
\label{prop:releThenCr}
Let $\bullet$ be an operator such that $\cali \bullet (f,\af') = (\PEM, \PAM\cup\{f\}, \af'')$, 
where $\forall w \in \wldE^{I}(\cali\cup(f,\af'))$, $\PAM^{\cali\bullet(f,\af')}(w)$ is a maximal consistent 
subset of $\PAM^{\cali\cup(f,\af')}(w)$. If $\bullet$ satisfies Relevance then it also satisfies Core Retainment.
\end{proposition}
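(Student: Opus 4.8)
The plan is to produce, for every EM world at which the revised program could still be inconsistent, an explicit witness for the Core Retainment clause, and the natural candidate is the retained active set $\PAM^{\cali\bullet(f,\af')}(w)$ itself --- whose being a $\subseteq$-maximal consistent subset of $\PAM^{\cali\cup(f,\af')}(w)$ is precisely the hypothesis we are handed. First I would fix $w\in\wldE^{I}(\cali\cup(f,\af'))$ and write $N_w=\PAM^{\cali\cup(f,\af')}(w)$, $M_w=\PAM^{\cali\bullet(f,\af')}(w)$, and $X_w=\{h\in\Theta\cup\Omega\mid w\models\af''(h)\}$. From the definitions, $M_w\cap(\Theta\cup\Omega)=X_w$, hence $M_w\subseteq X_w\cup\{f\}$; and by hypothesis $M_w$ is a maximal consistent subset of $N_w$. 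If $\PAM(w)\setminus X_w=\emptyset$ there is nothing to prove for this $w$, so I fix a removed element $g\in\PAM(w)\setminus X_w$; reading $\PAM(w)$ as the facts and strict rules of the input program active in $w$, namely $\{h\in\Theta\cup\Omega\mid w\models\af'(h)\}$, this gives $g\in\Theta\cup\Omega$, $g\in N_w$, and $g\notin M_w$ (the last because $M_w\cap(\Theta\cup\Omega)=X_w$).

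Next I would take $Y_w:=M_w$ and check the three Core Retainment requirements in turn. The inclusion $Y_w\subseteq X_w\cup\{f\}$ holds by the observation above; $Y_w$ is consistent because a maximal consistent subset is consistent; and since $g\in N_w\setminus M_w$ while $M_w$ is maximal among consistent subsets of $N_w$, the strictly larger set $M_w\cup\{g\}=Y_w\cup\{g\}$ cannot be consistent, so it is inconsistent. As $w$ and $g$ were arbitrary, Core Retainment follows. Relevance enters as a compatible, auxiliary fact about the same data: applied to $w$ and $g$ it returns a consistent $Z_w\supseteq X_w\cup\{f\}$ with $Z_w\cup\{g\}$ inconsistent, re-certifying --- now with a \emph{superset} witness --- the conflict between $g$ and the part of the AM active in $w$. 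A short remark would record the harmless dichotomy: when $f$ remains annotated-true in $w$ one in fact has $M_w=X_w\cup\{f\}$, so the superset witness supplied by Relevance and the subset witness $M_w$ of Core Retainment coincide; when it does not, $M_w=X_w$ and the same reasoning goes through with $Y_w=X_w$.

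The step I expect to be the real crux --- and what makes the maximality hypothesis indispensable --- is the orientation clash between the two postulates: Relevance only guarantees a consistent \emph{superset} of $X_w\cup\{f\}$ in conflict with the removed element $g$, whereas Core Retainment demands a consistent \emph{subset} of $X_w\cup\{f\}$ in conflict with $g$, and inconsistency does not transfer from a set to its subsets. Maximality bridges exactly this gap: it forces $M_w$ to be a $\subseteq$-maximal consistent collection inside $N_w$, so that every removed element conflicts with $M_w$ outright, and since $M_w\subseteq X_w\cup\{f\}$ this one set meets the subset demand as well. The only remaining care is bookkeeping --- fixing the reading of ``$\PAM(w)$'' so that each candidate $g$ genuinely lies in $N_w\setminus M_w$, and handling the dichotomy on whether the freshly added $f$ is still annotated-true in $w$ after the revision --- none of which requires any nontrivial computation.
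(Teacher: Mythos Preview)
The paper states this proposition without proof, so there is no authorial argument to compare against. Your argument is correct: for each $w\in\wldE^{I}(\cali\cup(f,\af'))$ and each $g\in\PAM(w)\setminus X_w$, the set $Y_w:=M_w=\PAM^{\cali\bullet(f,\af')}(w)$ is a consistent subset of $X_w\cup\{f\}$, and maximality of $M_w$ inside $N_w=\PAM^{\cali\cup(f,\af')}(w)$ together with $g\in N_w\setminus M_w$ forces $Y_w\cup\{g\}$ to be inconsistent, which is exactly the Core Retainment witness.

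Your own write-up already flags the point worth making explicit: the Relevance hypothesis is never used. The maximal-consistent-subset assumption on $\PAM^{\cali\bullet(f,\af')}(w)$ by itself yields Core Retainment (and in fact Relevance too, since $M_w$ is simultaneously a subset and, when $w\models\af''(f)$, equal to $X_w\cup\{f\}$). So the proposition as stated is weaker than what you prove; you might note this as a remark rather than burying it as ``a compatible, auxiliary fact.'' The only bookkeeping caveat is the reading of $\PAM(w)$ in the postulate statements---the paper's notation is loose here---but your interpretation (facts and strict rules of the original $\PAM$ active under $\af'$, hence under $\af$) is the natural one and makes everything go through.
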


Similarly, we can show the equivalence between the two Uniformity postulates under certain conditions.

\begin{proposition}
\label{prop:unifEquiv}
Let $\bullet$ be an operator such that $\cali \bullet (f,\af') = (\PEM, \PAM\cup\{f\}, \af'')$ and $\forall w$,
$\PAM^{\cali\bullet(f,\af')}(w) \subseteq \PAM^{\cali\cup(f,\af')}(w)$. Operator $\bullet$ satisfies 
Uniformity~1 iff it satisfies Uniformity~2.
\end{proposition}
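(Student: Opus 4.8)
The plan is to reduce the claimed equivalence to an elementary set-theoretic fact, exploiting that Uniformity~1 and Uniformity~2 have \emph{identical} hypotheses and differ only in the shape of their conclusions. First I would fix an arbitrary pair of inputs $(f,\af'_1)$ and $(g,\af'_2)$ satisfying the common premises of the two postulates, abbreviate the shared set of ``bad'' worlds by $W := \wldE^{I}(\cali\cup(f,\af'_1)) = \wldE^{I}(\cali\cup(g,\af'_2))$, and fix an element $h \in \PAM$. It then suffices to prove the following ``per-$h$'' equivalence: the Uniformity~1 conclusion for $h$,
\[
\{ w \in W : w \models \af'_1(h) \wedge \neg\af''_1(h)\} = \{ w \in W : w \models \af'_2(h) \wedge \neg\af''_2(h)\},
\]
holds if and only if the Uniformity~2 conclusion for $h$,
\[
\{ w \in W : w \models \af'_1(h) \wedge \af''_1(h)\} = \{ w \in W : w \models \af'_2(h) \wedge \af''_2(h)\},
\]
holds. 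Since both operator-level postulates quantify over exactly the same inputs (and, for Uniformity~1, over exactly the same $h$), establishing this per-$h$ equivalence for every $h$ immediately yields that $\bullet$ satisfies Uniformity~1 iff it satisfies Uniformity~2.

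To prove the per-$h$ equivalence, I would first record that the two pre-revision annotations agree on $\PAM$: each input only extends $\af$ with a single new entry for the fresh element it introduces ($f$, resp.\ $g$), hence $\af'_1(h) = \af(h) = \af'_2(h)$ for every $h \in \PAM$, and therefore $A := \{ w \in W : w \models \af'_1(h)\} = \{ w \in W : w \models \af'_2(h)\}$ does not depend on which input was chosen. Writing $B_i := \{ w \in W : w \models \af''_i(h)\}$ for $i \in \{1,2\}$, I next invoke the standing hypothesis of the proposition: applied to each input it reads $\PAM^{\cali\bullet(f,\af'_i)}(w) \subseteq \PAM^{\cali\cup(f,\af'_i)}(w)$ for every $w$, i.e.\ for $h \in \Theta\cup\Omega$ we get $w \models \af''_i(h) \Rightarrow w \models \af'_i(h)$; restricting to $W$ this gives $B_i \subseteq A$. (For $h \in \DD$ the displayed equalities are immediate, since the operator revises only the annotations of facts and strict rules; so we may assume $h \in \Theta\cup\Omega$.)

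With $B_1, B_2 \subseteq A$ in hand, the Uniformity~1 conclusion for $h$ is precisely $A \setminus B_1 = A \setminus B_2$, while the Uniformity~2 conclusion for $h$ is precisely $A \cap B_1 = A \cap B_2$, which — again because $B_i \subseteq A$ — is just $B_1 = B_2$. Since $X \mapsto A \setminus X$ is an involution on the subsets of $A$, we have $A \setminus B_1 = A \setminus B_2$ iff $B_1 = B_2$, which closes the argument. I do not anticipate a genuine obstacle here: the one point that requires care is recognizing that the hypothesis of the proposition is exactly what forces $B_i \subseteq A$, making the two conclusions complementary descriptions of the same situation (``which worlds lose the annotation of $h$'' versus ``which worlds keep it''); without that containment the two conclusions genuinely decouple. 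A minor bookkeeping point is to confirm that the extra shared premise linking the two inputs (equal $\wldE^{I}$, and matching consistency of the analogous subsets joined with the respective input) is never actually used in the equivalence — it is imposed verbatim by both postulates and is simply carried along.
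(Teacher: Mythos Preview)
The paper states Proposition~\ref{prop:unifEquiv} without proof, so there is no argument to compare against; your task reduces to whether your own argument is sound, and it is. The key observation you isolate --- that under the proposition's containment hypothesis the two conclusions describe complementary subsets of the \emph{same} set $A$ (worlds that lose the annotation of $h$ versus worlds that keep it) --- is exactly what makes the equivalence go through, and your reduction to ``$A\setminus B_1 = A\setminus B_2$ iff $B_1 = B_2$ when $B_1,B_2\subseteq A$'' is clean and correct. You are also right that the shared consistency premise linking the two inputs plays no role in the equivalence itself; it is carried along verbatim.

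One small remark: your treatment of $h\in\Presumps\cup\DRules$ appeals to the claim that ``the operator revises only the annotations of facts and strict rules.'' This is consistent with everything in the paper (Type~II consistency is defined purely over $\Theta\cup\Omega$, the set $\PAMI(w)$ ranges only over $\Theta\cup\Omega$, and the $\AFO$ construction is motivated entirely by restoring consistency there), but it is not literally part of the proposition's hypotheses, which speak only of $\PAMI$-sets and hence say nothing about defeasible elements. In the paper's intended setting this is harmless --- annotations of presumptions and defeasible rules are left untouched, so for such $h$ both displayed equalities are trivially $\emptyset=\emptyset$ (Uniformity~1) and $A=A$ (Uniformity~2). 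If you want the argument to be self-contained, it would suffice to add that assumption explicitly or to note that the postulates themselves are only interesting for $h\in\Theta\cup\Omega$.
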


\medskip
Given the results of Propositions~\ref{prop:releThenCr} and~\ref{prop:unifEquiv}, we will not study 
Core Retainment and Uniformity~2 with respect to the construction of a belief revision operator in the next section.

\subsection{An Operator for P-\PDeLP\ Revision}
\label{sec:op}

In this section, we introduce an operator for revising a P-\PDeLP\ program.
As stated earlier, any subset of $\PAM$ associated with a world in $\wldE^{I}(\cali\cup(f,\af'))$
must be modified by the operator in order to remain consistent.  So, for such a world $w$,
we introduce a set of candidate replacement programs for $\PAM(w)$ in order to maintain
consistency and satisfy the Inclusion postulate.
\begin{eqnarray*}
candPgm(w,\cali) &=& \{ \PAM' \; | \; \PAM' \subseteq \PAM(w) \;\text{s.t.}\; \PAM' \;\text{is consistent and}\; \\
&& \nexists \PAM'' \subseteq \PAM(w) \textit{ s.t. } \PAM'' \supset \PAM' \;\text{s.t.}\; \PAM'' \;\text{is consistent}\}
\end{eqnarray*}
Intuitively, $candPgm(w,\cali)$ is the set of maximal consistent subsets of $\PAM(w)$.
Coming back to the rain/hail example presented above, we have:
\begin{example}
\label{ex:candpgm}
Consider the P-\PDeLP\ program $\cali$ presented right after Example~\ref{ex:incon-II}, 
and the following EM knowledge base:
\begin{eqnarray*}
rain \vee   hail & : & 0.5 \pm 0.1; \\
rain \wedge hail & : & 0.3 \pm 0.1; \\
            wind & : & 0.2 \pm 0. 
\end{eqnarray*}
Given this setup, we have, for instance:
\[
candPgm(\{rain,hail,wind\},\cali) = \Big\{
\big\{umbrella\big\},
\big\{\neg umbrella\big\} 
\Big\}.
\]
Intuitively, this means that, since the world where $rain$, $hail$, and $wind$ are all true can be assigned a non-zero 
probability by the EM, we must choose
either $umbrella$ or $\neg umbrella$ in order to recover consistency.
\smallbsq
\end{example}

We now show a series of intermediate results that lead up to the representation theorem (Theorem~\ref{theo:rep}).
First, we show how this set plays a role in showing a necessary and
sufficient requirement for Inclusion and Consistency Preservation to hold together.

\begin{lemma}
\label{inclAndCons}
Given program $\cali$ and input $(f,\af')$, operator $\bullet$ satisfies Inclusion and Consistency
Preservation iff for $\cali\bullet(f,\af')=(\PEM,\PAM,\af'')$, for all $w\in \wldE^{I}(\cali\cup(f,\af'))$,
there exists an element $X \in candPgm(w,\cali\cup(f,\af'))$ s.t.\
$\{h \in \Theta \cup \Omega\cup\{f\} \; | \; w \models \af''(h) \} \subseteq X$.
\end{lemma}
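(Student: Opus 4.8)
The plan is to prove the two implications of the equivalence separately; the engine in both is the definition of Type~II consistency together with the observation that $\wldE^{I}(\cali\cup(f,\af'))$ collects exactly the worlds that can receive positive probability under some distribution satisfying $\PEM$ while inducing a classically inconsistent AM knowledge base. I write $\cali'$ for $\cali\cup(f,\af')$, take $\cali$ (hence $\PEM$) to be consistent -- the relevant case, and the one Consistency Preservation presupposes -- and use that $\cali\bullet(f,\af')$ carries the same $\PEM$ as $\cali$.

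For the ($\Rightarrow$) direction, suppose $\bullet$ satisfies Inclusion and Consistency Preservation and fix $w\in\wldE^{I}(\cali')$. By definition of $\wldE^{I}$ there is a $\Pr\models\PEM$ with $\Pr(w)>0$. Consistency Preservation makes $\cali\bullet(f,\af')$ consistent, hence Type~II consistent, so $\Pr(w)>0$ forces $\PAM^{\cali\bullet(f,\af')}(w)=\{h\in\Theta\cup\Omega\cup\{f\}\mid w\models\af''(h)\}$ to be consistent. By Inclusion $w\models\af''(h)\Rightarrow w\models\af'(h)$ for each $h\in\Theta\cup\Omega$ (and likewise for $f$, whose annotation the operator only restricts), so this set is a consistent subset of $\PAM^{\cali'}(w)$. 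Greedily enlarging it to a maximal consistent subset $X$ of $\PAM^{\cali'}(w)$ gives, by the definition of $candPgm$, an $X\in candPgm(w,\cali')$ with $\{h\mid w\models\af''(h)\}\subseteq X$, which is what was wanted.

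For the ($\Leftarrow$) direction, assume that for every $w\in\wldE^{I}(\cali')$ there is an $X\in candPgm(w,\cali')$ with $\{h\in\Theta\cup\Omega\cup\{f\}\mid w\models\af''(h)\}\subseteq X$. For Inclusion: on each such $w$ this gives $\{h\mid w\models\af''(h)\}\subseteq X\subseteq\PAM^{\cali'}(w)$, hence $w\models\af''(g)\Rightarrow w\models\af'(g)$; on every other world, which is not a source of inconsistency, the operator leaves all annotations unchanged, so $wld(\af''(g))\subseteq wld(\af'(g))$ for all $g\in\PAM$. For Consistency Preservation: fix $\Pr\models\PEM$ and $w$ with $\PAM^{\cali\bullet(f,\af')}(w)$ inconsistent; I show $\Pr(w)=0$. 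If $w\notin\wldE^{0}(\cali')$ then $\PAM^{\cali'}(w)$ is consistent and, by the Inclusion just proved, so is $\PAM^{\cali\bullet(f,\af')}(w)$ -- impossible; so $w\in\wldE^{0}(\cali')$. If moreover $w\in\wldE^{I}(\cali')$ then $\PAM^{\cali\bullet(f,\af')}(w)\subseteq X$ with $X$ consistent -- again impossible; so $w\in\wldE^{0}(\cali')\setminus\wldE^{I}(\cali')$, and then by the definition of $\wldE^{I}$ no distribution satisfying $\PEM$ assigns $w$ positive probability, i.e.\ $\Pr(w)=0$. Thus $\cali\bullet(f,\af')$ is Type~II consistent.

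The step I expect to be the main obstacle is handling the worlds outside $\wldE^{I}(\cali')$ in the ($\Leftarrow$) direction: the $candPgm$ condition constrains only the ``dangerous'' worlds, whereas Inclusion and Consistency Preservation are global statements, so the argument must appeal to the (implicit, and natural for a revision operator) stipulation that $\af''$ agrees with $\af'$ outside $\wldE^{I}(\cali')$ and only discards models inside it -- in particular that the input $f$'s own annotation is only restricted, never enlarged, so that $f$ can be treated on the same footing as the elements of $\Theta\cup\Omega$. Once that is pinned down, the three remaining cases ($w\notin\wldE^{0}$, $w\in\wldE^{I}$, and $w\in\wldE^{0}\setminus\wldE^{I}$) are routine.
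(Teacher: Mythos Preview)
The paper does not supply a proof of this lemma; it is stated as an intermediate result and then invoked in the proof sketch of Theorem~\ref{theo:rep}. Your argument follows the natural route one would expect: for $(\Rightarrow)$, Consistency Preservation forces $\PAM^{\cali\bullet(f,\af')}(w)$ to be classically consistent at any $w\in\wldE^{I}(\cali')$ (since such a $w$ has positive probability under some $\Pr\models\PEM$), Inclusion forces it to sit inside $\PAM^{\cali'}(w)$, and then one enlarges to a maximal consistent subset; for $(\Leftarrow)$, the three-way case split on $w\notin\wldE^{0}$, $w\in\wldE^{I}$, and $w\in\wldE^{0}\setminus\wldE^{I}$ is exactly right.

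You are also right to isolate, as the one genuine subtlety, the behaviour of $\af''$ at worlds \emph{outside} $\wldE^{I}(\cali')$ and on the input $f$ itself. The right-hand side of the lemma only constrains $\af''$ at the ``dangerous'' worlds, and the Inclusion postulate as written quantifies over $g\in\PAM$ rather than $g\in\PAM\cup\{f\}$; so deriving Inclusion globally in the $(\Leftarrow)$ direction, and treating $f$ on the same footing as elements of $\Theta\cup\Omega$ in the $(\Rightarrow)$ direction, both rely on the implicit stipulation that the operator never enlarges annotations and leaves them untouched outside $\wldE^{I}(\cali')$. The paper's eventual construction (the class $\AFO$, via $\textit{newFor}$) has exactly this property, so the lemma is evidently meant to be read under that standing convention; your identification of this as the point that needs to be pinned down is accurate rather than a flaw in your argument.
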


Next, we investigate the role that the set $candPgm$ plays in showing the necessary and sufficient
requirement for satisfying Inclusion, Consistency Preservation, and Relevance all at once.

\begin{lemma}
\label{corRetAndRele}
Given program $\cali$ and input $(f,\af')$, operator $\bullet$ satisfies Inclusion, Consistency Preservation,
and Relevance iff for
$\cali\bullet(f,\af')=(\PEM,\PAM,\af'')$, for all
$w\in \wldE^{I}(\cali\cup(f,\af'))$ we have
$\{h \in \Theta \cup \Omega\cup\{f\} \; | \; w \models \af''(h) \} \in candPgm(w,\cali\cup(f,\af'))$.
\end{lemma}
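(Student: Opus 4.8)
The plan is to build directly on Lemma~\ref{inclAndCons}, which already isolates the pair \emph{Inclusion}~+~\emph{Consistency Preservation} as the condition that, for every $w\in\wldE^{I}(\cali\cup(f,\af'))$, the retained set $Z_w := \{h\in\Theta\cup\Omega\cup\{f\}\mid w\models\af''(h)\}$ is contained in \emph{some} member of $candPgm(w,\cali\cup(f,\af'))$. The work of Lemma~\ref{corRetAndRele} is then exactly to show that conjoining \emph{Relevance} is what upgrades ``$Z_w$ sits inside a maximal consistent subset'' to ``$Z_w$ \emph{is} a maximal consistent subset'', i.e.\ $Z_w\in candPgm(w,\cali\cup(f,\af'))$. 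I would prove the two implications separately, reusing Lemma~\ref{inclAndCons} for the Inclusion/Consistency-Preservation part in each direction and doing the real reasoning only about Relevance.

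For the ``if'' direction, assume $Z_w\in candPgm(w,\cali\cup(f,\af'))$ for all $w\in\wldE^{I}(\cali\cup(f,\af'))$. Since $Z_w\subseteq Z_w$ and $Z_w\in candPgm(w,\cdot)$, Lemma~\ref{inclAndCons} immediately delivers Inclusion and Consistency Preservation. For Relevance, fix such a $w$ and a removed element $g\in\PAM^{\cali\cup(f,\af')}(w)\setminus X_w$; take $Y_w:=Z_w$. Then $X_w\cup\{f\}\subseteq Y_w$, $Y_w$ is consistent, and since $Z_w$ is a \emph{maximal} consistent subset of $\PAM^{\cali\cup(f,\af')}(w)$ while $g$ lies in $\PAM^{\cali\cup(f,\af')}(w)\setminus Z_w$, the set $Y_w\cup\{g\}=Z_w\cup\{g\}$ is inconsistent. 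Thus $Y_w$ witnesses Relevance. The only delicate bookkeeping is the status of $f$ itself in $w$: whether $w\models\af'(f)$ and whether $f$ is retained; I handle this by noting that if $f$ is dropped in $w$ then $candPgm(w,\cali\cup(f,\af'))$ coincides with the maximal consistent subsets of $\PAM^{\cali}(w)$, and $X_w\cup\{f\}\subseteq Y_w$ is still the right witness shape to use.

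For the ``only if'' direction, assume Inclusion, Consistency Preservation and Relevance. Lemma~\ref{inclAndCons} gives, for each $w\in\wldE^{I}(\cali\cup(f,\af'))$, a maximal consistent $X\in candPgm(w,\cali\cup(f,\af'))$ with $Z_w\subseteq X$, so $Z_w$ is consistent; it remains only to exclude a strict containment $Z_w\subsetneq X$. I would argue by contradiction: pick $g\in X\setminus Z_w$, observe that $g$ is an element removed by $\bullet$ from world $w$, and invoke Relevance to obtain $Y_w$ with $X_w\cup\{f\}\subseteq Y_w\subseteq\PAM^{\cali\cup(f,\af')}(w)$, $Y_w$ consistent, and $Y_w\cup\{g\}$ inconsistent. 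The crux is then to transfer the inconsistency of $Y_w\cup\{g\}$ down to $Z_w\cup\{g\}$ (which, being $\subseteq X$, is consistent), thereby deriving the contradiction; this is where Inclusion is used essentially, since it constrains $\af''$ so that no element outside $X_w\cup\{f\}$ can re-enter the annotation, pinning the only relevant consistent extensions of $X_w\cup\{f\}$ and forcing $Z_w$ to have been maximal all along. Making this reduction airtight — i.e.\ showing that Relevance together with Inclusion genuinely rules out $Z_w\subsetneq X$ rather than merely guaranteeing, as in a partial-meet construction, that $Z_w$ is an intersection of several candidate programs — is the step I expect to be the main obstacle; the ``if'' direction and the two appeals to Lemma~\ref{inclAndCons} are routine by comparison.
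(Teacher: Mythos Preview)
The paper states Lemma~\ref{corRetAndRele} without proof, so there is no argument to compare against; I evaluate your proposal on its own merits.

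Your ``if'' direction is essentially correct. The bookkeeping about $f$ that you flag resolves cleanly: if $Z_w\in candPgm(w,\cali\cup(f,\af'))$ and $f\notin Z_w$, then (since $\cali$ is consistent) maximality forces $Z_w=\PAMI(w)$; hence $\PAM(w)\setminus X_w=\emptyset$ and Relevance holds vacuously at $w$. In the remaining case $f\in Z_w$, your choice $Y_w:=Z_w$ works exactly as you describe.

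Your caution about the ``only if'' direction is not merely cosmetic: the step you identify as ``the main obstacle'' does not go through, and the lemma as stated appears too strong. Relevance in the Hansson tradition characterises \emph{partial-meet} consolidation, where the outcome may be an \emph{intersection} of several maximal consistent subsets rather than a single one; nothing in Inclusion, Consistency Preservation, or Relevance forces maxichoice. Concretely, take $\PAM^{\cali}(w)=\{a,b,\ \neg c\leftarrow b\}$ (consistent) and input $f=c$ with $\af'(c)=\Tr$. Then $\PAM^{\cali\cup(f,\af')}(w)=\{a,b,c,\ \neg c\leftarrow b\}$ has three maximal consistent subsets, and the operator that sets $Z_w=\{a\}$ (dropping $b$, $c$, and the rule at $w$) satisfies Inclusion and Consistency Preservation trivially, and satisfies Relevance because for $g=b$ one may take $Y_w=\{a,c,\ \neg c\leftarrow b\}$ and for $g=(\neg c\leftarrow b)$ one may take $Y_w=\{a,b,c\}$, each a consistent superset of $X_w\cup\{f\}=\{a,c\}$ whose extension by $g$ is inconsistent. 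Yet $\{a\}\notin candPgm(w,\cali\cup(f,\af'))$. So your attempted contradiction---``transfer the inconsistency of $Y_w\cup\{g\}$ down to $Z_w\cup\{g\}$''---fails precisely because Relevance allows the witnesses $Y_w$ to vary with $g$ and to lie strictly above $Z_w$. The only-if direction would need an additional maxichoice-type hypothesis (as in Proposition~\ref{prop:releThenCr}) or a strengthening of Relevance; as written it is a genuine gap, and you were right to single it out.
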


The last of the intermediate results shows that if there is a consistent program where two inputs
cause inconsistencies to arise in the same way, then for each world the set of
candidate replacement programs (minus the added AM formula) is the same.
This result will be used as a support of the satisfaction of the first Uniformity postulate.

\begin{lemma}
\label{uniPCsame-iff-candPgmIsSame}
Let $\cali=(\PEM,\PAM,\af)$ be a consistent program, $(f_1,\af'_1)$, $(f_2,\af'_2)$ be two inputs, and
$\cali_i = (\PEM,\PAM\cup\{f_i\},\af'_i)$.
If
$\wldE^{I}(\cali_1)=\wldE^{I}(\cali_2)$, then for all $w \in \wldE^{I}(\cali_1)$ and all $X \subseteq \PAM(w)$
we have that:
\begin{enumerate}\addtolength{\itemsep}{6pt}
\item If $\{x \; | \; x \in X\cup\{f_1\}, w \models \af'_1(x)\}$ is inconsistent
$\Leftrightarrow$ $\{x \; | \; x \in X\cup\{f_2\}, w \models \af'_2(x)\}$ is inconsistent, then
$\{X \setminus \{f_1\} \; | \; X \in candPgm(w,\cali_1)\}=
\{X \setminus \{f_2\} \; | \; X \in candPgm(w,\cali_2)\}$.
\item If $\{X \setminus \{f_1\} \; | \; X \in candPgm(w,\cali_1)\}=
\{X \setminus \{f_2\} \; | \; X \in candPgm(w,\cali_2)\}$ then $\{x \; | \; x \in X\cup\{f_1\}, w \models \af'_1(x)\}$ is inconsistent $\Leftrightarrow$ $\{x \; | \; x \in X\cup\{f_2\}, w \models \af'_2(x)\}$ is inconsistent.
\end{enumerate}
\end{lemma}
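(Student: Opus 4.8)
The plan is to fix an arbitrary world $w \in \wldE^{I}(\cali_1) = \wldE^{I}(\cali_2)$ (both statements hold vacuously when this set is empty) and reduce each of the two parts to an elementary fact about the $\subseteq$-maximal consistent subsets of a fixed \emph{consistent} finite set. Write $A = \PAM^{\cali}(w) = \{ h \in \Theta \cup \Omega \mid w \models \af(h) \}$, and recall that each $\af'_i$ agrees with $\af$ on $\Theta \cup \Omega$ (it expands $\af$ only by a value for the new input $f_i$). First I would establish three facts about $w$. (a)~$A$ is classically consistent: since $w \in \wldE^{I}(\cali_1)$ there is $\Pr \models \PEM$ with $\Pr(w) > 0$, and Type~II consistency of $\cali$ (Definition~\ref{type2def}) then forces $\PAM^{\cali}(w) = A$ to be consistent. (b)~For $i \in \{1,2\}$ we have $w \models \af'_i(f_i)$, hence $\PAM^{\cali_i}(w) = A \cup \{f_i\}$, and this set is inconsistent: since $\af'_i$ agrees with $\af$ on $\Theta\cup\Omega$, $\PAM^{\cali_i}(w)$ is $A$ if $w \not\models \af'_i(f_i)$ and $A \cup \{f_i\}$ otherwise; the first case would make $A$ inconsistent (as $w \in \wldE^{0}(\cali_i)$), contradicting (a), so the second holds and the same membership makes $A \cup \{f_i\}$ inconsistent (here $w \in \wldE^{0}(\cali_2)$ follows from $\wldE^{I}(\cali_1) = \wldE^{I}(\cali_2)$). (c)~For every $X \subseteq \PAM(w) = A$ and each $i$, the set $\{ x \mid x \in X \cup \{f_i\},\ w \models \af'_i(x) \}$ equals $X \cup \{f_i\}$, since each $x \in X$ lies in $\Theta\cup\Omega$ with $w \models \af(x) = \af'_i(x)$ and $w \models \af'_i(f_i)$ by (b). Hence every clause ``$\{x \mid \dots\}$ is inconsistent'' occurring in the lemma may be read simply as ``$X \cup \{f_i\}$ is inconsistent''.

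Next I would record the shape of the candidate sets. Because $A$ is consistent while $A \cup \{f_i\}$ is not, $A$ is the unique $\subseteq$-maximal consistent subset of $A$; hence every member of $candPgm(w,\cali_i)$ is either $A$ or contains $f_i$. Writing $\D_i$ for the collection of $\subseteq$-maximal subsets $S \subseteq A$ with $S \cup \{f_i\}$ consistent, one checks that $candPgm(w,\cali_i) = \{A\} \cup \{\, S \cup \{f_i\} \mid S \in \D_i \,\}$ and, since $A \notin \D_i$, that $\{\, M \setminus \{f_i\} \mid M \in candPgm(w,\cali_i) \,\} = \{A\} \cup \D_i$. Thus the $f_i$-deleted candidate set both determines and is determined by $\D_i$.

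With this in hand both implications become short. For part~1, assume the reduced hypothesis: for all $X \subseteq A$, $X\cup\{f_1\}$ is consistent iff $X\cup\{f_2\}$ is. Since $S \in \D_i$ holds iff ``$S\cup\{f_i\}$ is consistent and $S\cup\{a\}\cup\{f_i\}$ is inconsistent for every $a \in A\setminus S$'', applying the hypothesis with $X = S$ and with $X = S\cup\{a\}$ (both subsets of $A$) shows this condition does not depend on $i$; so $\D_1 = \D_2$, and by the previous paragraph the two $f_i$-deleted candidate sets coincide. For part~2, deleting the common element $A$ from the two (equal) $f_i$-deleted candidate sets yields $\D_1 = \D_2 =: \D$; then, for any $X \subseteq A$, $X\cup\{f_i\}$ is consistent iff it extends to a $\subseteq$-maximal consistent subset of $A\cup\{f_i\}$, iff that extension equals $S\cup\{f_i\}$ for some $S \in \D_i$ with $X \subseteq S$ --- a criterion which, since $\D_i = \D$, is the same for $i=1$ and $i=2$. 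Combined with (c) this gives the claimed biconditional for every $X$; letting $w$ range over $\wldE^{I}(\cali_1)$ then finishes both parts.

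I expect the delicate point to be part~2: a priori the $\subseteq$-maximal consistent subsets of $A\cup\{f_1\}$ and of $A\cup\{f_2\}$ that avoid the input could differ in a way that vanishes once $f_i$ is deleted, which would break the equivalence. What rescues the argument is fact (a): because $\cali$ is assumed consistent, every world relevant to the lemma has $\PAM^{\cali}(w)$ consistent, so $A$ is the \emph{only} $f_i$-free maximal consistent subset; it occurs in both candidate families, and removing it isolates exactly the $f_i$-bearing part, which is all that the per-$X$ consistency behaviour depends on. The remaining items --- the structural description of $candPgm$ and the reduction in (c) --- are routine bookkeeping once the definitions are unwound.
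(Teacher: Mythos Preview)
The paper does not actually supply a proof of this lemma; it is stated as an intermediate result and then only invoked in the proof sketch of Theorem~\ref{theo:rep}. So there is no ``paper's own proof'' to compare against, and your proposal has to be judged on its merits.

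Your argument is correct. The three preliminary facts (a)--(c) are exactly what is needed: (a) uses the hypothesis that $\cali$ is Type~II consistent together with $w \in \wldE^{I}(\cali_1)$ to conclude that $A = \PAM^{\cali}(w)$ is classically consistent; (b) then forces $w \models \af'_i(f_i)$ and $A \cup \{f_i\}$ inconsistent, since otherwise $\PAM^{\cali_i}(w)$ would equal $A$ and be consistent; and (c) reduces the awkward ``$\{x \mid x \in X \cup \{f_i\},\ w \models \af'_i(x)\}$'' to the plain set $X \cup \{f_i\}$. The structural description $candPgm(w,\cali_i) = \{A\} \cup \{S \cup \{f_i\} \mid S \in \D_i\}$ is correct because any maximal consistent subset not containing $f_i$ is contained in the consistent set $A$ and hence equals $A$ by maximality, while those containing $f_i$ correspond bijectively to maximal $S \subseteq A$ with $S \cup \{f_i\}$ consistent. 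With $f_i \notin A$ and $A \notin \D_i$ (both immediate from (b)), the $f_i$-deleted family is indeed $\{A\} \cup \D_i$, and $\D_i$ can be recovered from it by removing $A$.

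You are also right to flag part~2 as the delicate direction and to identify the hypothesis ``$\cali$ consistent'' as what makes it work: without (a), the $f_i$-free maximal consistent subsets of $A \cup \{f_i\}$ need not reduce to the single set $A$, and deleting $f_i$ could then collapse genuinely different families. Your use of ``every consistent subset extends to a maximal one'' is unproblematic here since the ground program is finite and classical consistency is downward closed under $\subseteq$. The only cosmetic point is that the lemma's quantifier placement (``for all $w$ and all $X$ we have: if \dots'') is ambiguous; your reading --- for each $w$, the biconditional ranges over all $X \subseteq \PAM(w)$ as a single hypothesis --- matches how the Uniformity~1 postulate is phrased and is clearly the intended one.
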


We now have the necessary tools to present the construction of our non-prioritized belief revision operator.

\medskip
\noindent
{\bf Construction.}
Before introducing the construction, we define some preliminary notation. 
Let $\Phi:\wldE \rightarrow 2^{[\Facts]\cup[\SRules]}$.  
For each $h$ there is a formula in $\PAM\cup\{f\}$, where $f$ is part of the input. 
Given these elements, we define:
\begin{eqnarray*}
\textit{newFor}(h,\Phi,\cali,(f,\af')) &=&\af'(h)\wedge\bigwedge_{w \in \wldE^{I}(\cali\cup(f,\af')) \; | \; h \notin \Phi(w)}\neg for(w_i)
\end{eqnarray*}
The following definition then characterizes the class of operators called $\AFO$ (annotation function-based operators).  
\begin{definition}[AF-based Operators]
A belief revision operator $\bullet$ is an ``annotation function-based'' (or af-based) operator
($\bullet \in \AFO$) iff
given program $\cali=(\PEM,\PAM,\af)$ and input $(f,\af')$, the revision is defined as
$\cali\bullet(f,\af')=(\PEM,\PAM\cup\{f\},\af'')$, where:
\[
\forall h, \af''(h) = \textit{newFor}(h,\Phi,\cali,(f,\af'))
\]
where $\forall w\in \wldE$, $\Phi(w)\in\AfCp(w,\cali\cup(f,\af'))$.
\end{definition}

As the main result of the paper, we now show that 
satisfying a key set of postulates is a necessary and sufficient condition for
membership in $\AFO$.

\begin{theorem}[Representation Theorem]
\label{theo:rep}
An operator $\bullet$ belongs to class $\AFO$ iff it satisfies
Inclusion, Vacuity, Consistency Preservation, Weak Success, Relevance,
and Uniformity~1.
\end{theorem}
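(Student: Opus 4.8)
\emph{Overview.} The statement is an equivalence, so I would prove the two implications separately; both directions rest on the intermediate Lemmas~\ref{inclAndCons}, \ref{corRetAndRele} and~\ref{uniPCsame-iff-candPgmIsSame}, which have already recast the relevant postulates as conditions on the candidate-program families $\AfCp(w,\cali\cup(f,\af'))=candPgm(w,\cali\cup(f,\af'))$ at the inconsistency-prone worlds $w\in\wldE^{I}(\cali\cup(f,\af'))$. The one computational fact I would record and reuse throughout is: for $\bullet\in\AFO$ with selection $\Phi$ and any $w\in\wldE^{I}(\cali\cup(f,\af'))$, we have $w\models\textit{newFor}(h,\Phi,\cali,(f,\af'))$ iff $w\models\af'(h)$ and $h\in\Phi(w)$ --- one implication because $w\not\models\neg\for(w)$, the other because $\Phi(w)\subseteq\{h\mid w\models\af'(h)\}$ since $\Phi(w)\in candPgm(w,\cali\cup(f,\af'))$; consequently $\{h\in\Theta\cup\Omega\cup\{f\}\mid w\models\af''(h)\}=\Phi(w)$, and off $\wldE^{I}(\cali\cup(f,\af'))$ the big conjunction in $\textit{newFor}$ contributes nothing, so $\af''$ agrees with $\af'$ there.

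\emph{Soundness ($\AFO\subseteq$ postulate-satisfying operators).} From the fact above, $\{h\in\Theta\cup\Omega\cup\{f\}\mid w\models\af''(h)\}=\Phi(w)\in candPgm(w,\cali\cup(f,\af'))$ at every $w\in\wldE^{I}(\cali\cup(f,\af'))$, so Lemma~\ref{corRetAndRele} immediately yields Inclusion, Consistency Preservation and Relevance. If $\cali\cup(f,\af')$ is (Type~II) consistent then $\wldE^{I}(\cali\cup(f,\af'))=\emptyset$, the conjunction is empty, $\af''=\af'$, and hence $\cali\bullet(f,\af')=\cali\cup(f,\af')$; this gives Vacuity and Weak Success. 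For Uniformity~1 I would take two inputs $(f,\af'_1),(g,\af'_2)$ satisfying its hypotheses, invoke Lemma~\ref{uniPCsame-iff-candPgmIsSame}(1) to get $\{X\setminus\{f\}\mid X\in candPgm(w,\cali_1)\}=\{X\setminus\{g\}\mid X\in candPgm(w,\cali_2)\}$ for every $w\in\wldE^{I}(\cali_1)=\wldE^{I}(\cali_2)$, and then use that an $\AFO$ operator's selection at a world depends only on that family, together with the fact that $\af'_1$ and $\af'_2$ coincide on every original element of $\PAM$; the worlds deleted from $\af(h)$ are then the same on both sides, which is exactly the required identity.

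\emph{Completeness (postulate-satisfying operators $\subseteq\AFO$).} Given $\bullet$ satisfying the six postulates, write $\cali\bullet(f,\af')=(\PEM,\PAM\cup\{f\},\af'')$ and \emph{define} $\Phi(w)=\{h\in\Theta\cup\Omega\cup\{f\}\mid w\models\af''(h)\}$ for $w\in\wldE^{I}(\cali\cup(f,\af'))$ and any legal value for the other worlds (which never enter $\textit{newFor}$). Lemma~\ref{corRetAndRele} applied to Inclusion, Consistency Preservation and Relevance gives $\Phi(w)\in candPgm(w,\cali\cup(f,\af'))=\AfCp(w,\cali\cup(f,\af'))$, so $\Phi$ is a legal $\AFO$ selection, and it only remains to verify $\af''(h)\equiv\textit{newFor}(h,\Phi,\cali,(f,\af'))$ for all $h$. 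The inclusion $wld(\af''(h))\subseteq wld(\textit{newFor}(h,\Phi,\cali,(f,\af')))$ follows from Inclusion ($wld(\af''(h))\subseteq wld(\af'(h))$) plus the construction of $\Phi$ (a world $w\models\af''(h)$ with $w\in\wldE^{I}(\cali\cup(f,\af'))$ has $h\in\Phi(w)$, so it violates no deleted conjunct). For the reverse inclusion, take $w\models\af'(h)$ avoiding every world that $\textit{newFor}$ deletes: if $w\in\wldE^{I}(\cali\cup(f,\af'))$ then $h\in\Phi(w)$, hence $w\models\af''(h)$; if $w\notin\wldE^{I}(\cali\cup(f,\af'))$ one shows $\bullet$ did not delete $w$ from $\af'(h)$ --- when $\wldE^{I}=\emptyset$ this is Vacuity, and in general a deletion at such a world has no Relevance/Core-Retainment witness, so after the harmless normalization of annotations on the probability-zero and locally consistent worlds one may take $\af''$ to equal $\af'$ there. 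This establishes $\bullet\in\AFO$.

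\emph{Main obstacle.} I expect the soundness of Uniformity~1 to be the crux, because it is the only step comparing two different inputs: it requires Lemma~\ref{uniPCsame-iff-candPgmIsSame} to convert its hypotheses into an identity of candidate-program families at each $w\in\wldE^{I}$, and then careful bookkeeping of the freshly added formulas $f$ versus $g$ when passing between $candPgm(w,\cali_1)$ and $candPgm(w,\cali_2)$ and when restricting the conclusion to the original elements of $\PAM$. The secondary difficulty, on the completeness side, is that no postulate constrains annotations at worlds outside $\wldE^{I}(\cali\cup(f,\af'))$ from above; there one must argue by minimality of change (any deletion is unwitnessable in the sense of Relevance) together with the observation that such worlds are irrelevant to every quantity occurring in the postulates and in $\textit{newFor}$, so the operator can be normalized to agree with $\af'$ there without affecting anything.
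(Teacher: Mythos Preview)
Your proposal is correct and follows essentially the same approach as the paper: both directions are reduced to Lemmas~\ref{corRetAndRele} and~\ref{uniPCsame-iff-candPgmIsSame}, with Vacuity and Weak Success coming from the observation that the big conjunction in $\textit{newFor}$ is empty when $\wldE^{I}(\cali\cup(f,\af'))=\emptyset$. The only notable difference is that your completeness direction is constructive (defining $\Phi$ from $\af''$ and verifying $\af''\equiv\textit{newFor}$ pointwise), whereas the paper argues by contradiction; your version is in fact more detailed than the paper's sketch, and the two subtleties you flag --- the bookkeeping of $f$ versus $g$ in Uniformity~1, and the normalization of $\af''$ at worlds outside $\wldE^{I}$ --- are genuine points the paper's sketch passes over silently.
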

\begin{proof} (Sketch)
\noindent (If) By the fact that formulas associated with worlds in the set $\wldE^{I}(\cali\cup(f,\af'))$
are considered in the change of the annotation function, Vacuity and Weak Success follow trivially.
Further, Lemma~\ref{corRetAndRele} shows that Inclusion, Consistency Preservation,
and Relevance are satisfied while Lemma~\ref{uniPCsame-iff-candPgmIsSame} shows that
Uniformity~1 is satisfied.

\medskip
\noindent (Only-If) Suppose BWOC that an operator $\bullet$ satisfies all postulates and $\bullet \notin \AFO$.
Then, one of four conditions must hold: (i) it does not satisfy Lemma~\ref{corRetAndRele} or (ii) it does not satisfy Lemma~\ref{uniPCsame-iff-candPgmIsSame}.
However, by those previous arguments, if it satisfies all postulates, these arguments must be true as well
-- hence a contradiction.
$\hfill\Box$
\end{proof}

\section{Conclusions}
\label{sec:conc}

We have proposed an extension of the \PDeLP\ language that allows sentences to be annotated with probabilistic events;
such events are connected to a probabilistic model, allowing a clear separation of interests between
certain and uncertain knowledge. After presenting the language, we focused on characterizing belief revision
operations over P-\PDeLP\ KBs. We presented a set of postulates inspired in the ones presented for non-prioritized
revision of classical belief bases, and then proceeded to study a construction based on these postulates and prove that the
two characterizations are equivalent.

As future work, we plan to study other kinds of operators, such as more general ones that allow the modification of the
EM, as well as others that operate at different levels of granularity. Finally, we are studying the application of P-\PDeLP\
to real-world problems in cyber security and cyber warfare domains.


\medskip
\noindent
{\bf Acknowledgments.}
The authors are partially supported by
UK EPSRC grant EP/J008346/1 (``PrOQAW''),
ERC grant 246858 (``DIADEM''), ARO project 2GDATXR042, DARPA project R.0004972.001,
Consejo Nacional de Investigaciones Cient\'{\i}ficas y T\'ecnicas (CONICET) and
Universidad Nacional del Sur (Argentina).

The opinions in this paper are those of the authors and do
not necessarily reflect the opinions of the funders, the U.S. Military
Academy, or the U.S.\ Army.


\bibliographystyle{splncs}
\bibliography{ProbPreDeLP}  


\end{document}